\documentclass[letterpaper,journal]{IEEEtran}
\usepackage{amsmath}
\usepackage{amssymb}
\usepackage{times}
\usepackage{graphicx}
\usepackage[colorlinks=true, allcolors=blue]{hyperref}
\usepackage{booktabs}
\usepackage{array}
\usepackage{makecell}
\usepackage{pifont}
\usepackage[T1]{fontenc}
\DeclareFontShape{T1}{ptm}{m}{scit}{<-> ssub * ptm/m/sc}{}
\usepackage{multirow}
\usepackage{longtable}    
\usepackage{tabularx}
\usepackage{microtype}  
\usepackage[table]{xcolor}
\usepackage[linesnumbered,ruled,vlined]{algorithm2e}

\usepackage{amsthm}
\usepackage[english]{babel}
\theoremstyle{plain}
\newtheorem{theorem}{Theorem}
\newtheorem{proposition}[theorem]{Proposition}
\newtheorem{lemma}[theorem]{Lemma}
\newtheorem{corollary}[theorem]{Corollary}
\theoremstyle{definition}
\newtheorem{definition}[theorem]{Definition}
\newtheorem{assumption}[theorem]{Assumption}
\theoremstyle{remark}

\usepackage{tikz}
\usetikzlibrary{arrows.meta,calc,fit,positioning,shapes.geometric}
\tikzset{
  paperbox/.style={draw=black!55, rounded corners=2pt, align=center, inner sep=4pt, fill=white},
  paperarrow/.style={-{Latex[length=2mm]}, semithick, draw=black!75, rounded corners=2pt},
  papersoft/.style={draw=black!40, rounded corners=2pt, align=center, inner sep=3.5pt, fill=black!4},
  paperaccent/.style={draw=black!75, rounded corners=2pt, align=center, inner sep=4pt, fill=black!8},
  papergreen/.style={draw=black!75, rounded corners=2pt, align=center, inner sep=4pt, fill=black!16},
  paperorange/.style={draw=black!75, rounded corners=2pt, align=center, inner sep=4pt, fill=black!24}
}

\UseRawInputEncoding

\setcellgapes{2pt}
\makegapedcells
\AtBeginDocument{%
  \fontdimen2\font=0.22em 
  \fontdimen3\font=0.1em  
  \fontdimen4\font=0.06em  
}

\begin{document}

\title{Prediction-Robust Service Deployment with Capacity-Aware Edge Admission}

\author{Hailiang Zhao, Ziqi Wang, Yifei Zhang, Mingyi Liu, Xinkui Zhao, Kingsum Chow, and Shuiguang Deng%
\thanks{H. Zhao, Z. Wang, Y. Zhang, X. Zhao, and K. Chow are with the School of Software Technology, Zhejiang University, Ningbo 315048, China (e-mail: hliangzhao@zju.edu.cn; wangziqi0312@zju.edu.cn; zhangyifei@zju.edu.cn; zhaoxinkui@zju.edu.cn; kingsum.chow@zju.edu.cn).}%
\thanks{M. Liu is with the Faculty of Computing, Harbin Institute of Technology, Harbin 150001, China (e-mail: lmy@hit.edu.cn).}%
\thanks{S. Deng is with the College of Computer Science and Technology, Zhejiang University, Hangzhou 310027, China (e-mail: dengsg@zju.edu.cn).}%
}



\maketitle

\begin{abstract}
Edge platforms instantiate executable services close to users to reduce request-serving cost, but each instance incurs a one-time deployment cost and remains useful only for a finite time-to-live (TTL). The resulting online decision is both prediction-sensitive and capacity-coupled: an optimistic forecast can waste deployment cost, whereas a delayed decision misses the burst it is intended to serve. We study this problem under a common TTL cost model and propose \textsc{CAPSUM}, a capacity-aware admission policy with an elastic specialization, \textsc{CAPSUM-E}. In the local elastic setting, every node-service trace is exactly a variable-price Bahncard instance. This reduction lets \textsc{CAPSUM-E} inherit PFSUM's tight prediction-error-dependent ratio, including $2/(1+\beta)$ consistency and $1/\beta$ robustness for $\beta>0$. A redirect-aware variant preserves the same local deployment schedule. For finite-capacity nodes, \textsc{CAPSUM} combines size-scaled break-even tests, a utilization-dependent shadow price, and evidence-density eviction; we prove capacity feasibility, scale invariance, and exact agreement with \textsc{CAPSUM-E} under an elastic configuration. We implement an exact local offline dynamic program and compare against direct common-model baselines and documented source-derived adapters for EDP-A, OREO, and uEDC-L. Experiments cover controlled prediction error, three synthetic demand regimes, a causal predictor on a public Globus Compute trace, and joint scaling to 1,024 nodes and 10,000 services. Under the common model, \textsc{CAPSUM} reduces normalized cost by 33.7--42.9\% relative to the best source-derived adapter across the synthetic regimes and by 45.5\% on the sampled trace.
\end{abstract}

\begin{IEEEkeywords}
Edge service deployment, edge computing, learning-augmented algorithms, online algorithms, and competitive analysis.
\end{IEEEkeywords}

\section{Introduction}
\label{sec:intro}

Edge platforms increasingly host stateless microservices and serverless functions close to users. A nearby instance can avoid a wide-area round trip and reduce the per-request resource cost, whereas a cold instance must first fetch an image, initialize its runtime, and reserve execution state. These effects make edge deployment a rent-or-buy decision: paying an up-front instantiation cost is worthwhile only if the requests served during the instance's useful lifetime recover that cost. Contemporary systems make this decision amid geographic demand shifts, short-lived hotspots, and heterogeneous service footprints. Consequently, a placement chosen from a global average popularity can be economically wrong at an individual edge node even when the global forecast is accurate.

Figure~\ref{fig:motivation} illustrates the setting. A crowd of AR users may create a local burst at $e_1$, and then move towards $e_2$ before an instance at $e_1$ expires. Sending all requests to the origin avoids an unnecessary deployment but repeatedly incurs the origin-serving cost during the burst. In contrast, deploying whenever a predictor signals demand can pay for an instance after a false alarm or just as the demand leaves the node. Deploying only after observing the burst is also inadequate: the past workload cannot be served retroactively by the newly created instance. The decision must therefore be local in space, forward looking in time, and safe when the forecast is wrong.

\begin{figure*}[t]
\centering
\begin{tikzpicture}[x=.92cm,y=1cm,font=\sffamily\fontsize{7.2}{8.4}\selectfont]
  \node[font=\sffamily\fontsize{8.2}{9.4}\selectfont\bfseries,anchor=west] at (0,3.95) {(a) Geographically shifting demand};
  \node[paperaccent, text width=23mm, minimum height=9mm] (cloud) at (4.0,3.02)
    {\textbf{Origin cloud}\\[-1pt]{\sffamily\fontsize{6.8}{7.8}\selectfont image registry and fallback}};
  \node[paperbox, text width=18mm, minimum height=9mm] (e1) at (0.8,1.18)
    {\textbf{Edge $e_1$}\\[-1pt]{\sffamily\fontsize{6.8}{7.8}\selectfont local zone A}};
  \node[papersoft, text width=18mm, minimum height=9mm] (e2) at (4.0,1.18)
    {\textbf{Edge $e_2$}\\[-1pt]{\sffamily\fontsize{6.8}{7.8}\selectfont local zone B}};
  \node[paperbox, text width=18mm, minimum height=9mm] (e3) at (7.2,1.18)
    {\textbf{Edge $e_3$}\\[-1pt]{\sffamily\fontsize{6.8}{7.8}\selectfont local zone C}};
  \draw[semithick,draw=black!55] (0.8,2.28) -- (7.2,2.28);
  \draw[semithick,draw=black!55] (cloud.south) -- (4.0,2.28);
  \draw[paperarrow] (0.8,2.28) -- (e1.north);
  \draw[paperarrow] (4.0,2.28) -- (e2.north);
  \draw[paperarrow] (7.2,2.28) -- (e3.north);
  \node[font=\sffamily\fontsize{6.8}{7.8}\selectfont,fill=white,inner sep=1pt] at (4.0,2.28)
    {image distribution / fallback};
  \node[circle,draw=black!70,fill=white,minimum size=4.2mm,inner sep=1pt,font=\sffamily\fontsize{6.4}{7.4}\selectfont] (u11) at (0.05,0.10) {AR};
  \node[circle,draw=black!70,fill=black!8,minimum size=4.2mm,inner sep=1pt,font=\sffamily\fontsize{6.4}{7.4}\selectfont] (u12) at (0.80,0.10) {AR};
  \node[circle,draw=black!70,fill=black!16,minimum size=4.2mm,inner sep=1pt,font=\sffamily\fontsize{6.4}{7.4}\selectfont] (u13) at (1.55,0.10) {AR};
  \draw[paperarrow] (u11.north) -- (u11.north |- e1.south);
  \draw[paperarrow] (u12.north) -- (e1.south);
  \draw[paperarrow] (u13.north) -- (u13.north |- e1.south);
  \node[circle,draw=black!70,fill=black!8,minimum size=4.2mm,inner sep=1pt,font=\sffamily\fontsize{6.4}{7.4}\selectfont] (u21) at (4.0,0.10) {AR};
  \draw[paperarrow] (u21.north) -- (e2.south);
  \node[font=\sffamily\fontsize{6.8}{7.8}\selectfont,align=center,text=black!70] at (6.35,0.10)
    {localized request streams};

  \node[font=\sffamily\fontsize{8.2}{9.4}\selectfont\bfseries,anchor=west] at (9.45,3.95) {(b) TTL rent-or-buy evidence};
  \draw[-{Latex[length=2mm]},semithick,black!65] (9.75,3.00) -- (18.00,3.00);
  \node[above,font=\sffamily\scriptsize] at (10.30,3.00) {$t_0$: isolated};
  \node[above,font=\sffamily\scriptsize] at (13.65,3.00) {$t_1$: burst at $e_1$};
  \node[above,font=\sffamily\scriptsize] at (17.10,3.00) {$t_2$: move to $e_2$};
  \draw[black!80,very thick] (10.25,2.42) -- ++(0.35,0);
  \foreach \x/\h in {13.15/0.30,13.55/0.70,13.95/1.05,14.35/0.82} {
    \draw[black!85,very thick] (\x,1.45) -- ++(0,\h);
  }
  \foreach \x/\h in {16.25/0.30,16.65/0.72,17.05/1.02,17.45/0.78} {
    \draw[black!45,very thick,densely dashed] (\x,1.45) -- ++(0,\h);
  }
  \node[below,font=\sffamily\scriptsize,text=black!85] at (13.75,1.42) {$e_1$ workload};
  \node[below,font=\sffamily\scriptsize,text=black!55] at (16.85,1.42) {$e_2$ workload};
  \draw[densely dashed,black!65,thick] (12.90,2.48) -- (14.55,2.48);
  \node[font=\sffamily\scriptsize,anchor=west,text=black!70] at (12.90,2.62) {next-TTL forecast};
  \node[papersoft, text width=21mm, minimum height=10mm] at (10.65,0.38)
    {\textbf{Origin only}\\[-1pt]\scriptsize no $C$; repeated $w$};
  \node[paperorange, text width=23mm, minimum height=10mm] at (13.75,0.38)
    {\textbf{Forecast only}\\[-1pt]\scriptsize false alarms incur $C$};
  \node[papergreen, text width=24mm, minimum height=10mm] at (17.00,0.38)
    {\textbf{Dual evidence}\\[-1pt]\scriptsize past and future each cover $C$};
\end{tikzpicture}
\caption{Motivating scenario for decentralized TTL-based edge service deployment. Requests are spatially localized and may move between edge zones before an instantiated service expires. Origin-only service forgoes savings during a burst; forecast-only deployment can incur cost $C$ after a false alarm or demand migration; and a history-only policy deploys too late to serve a completed burst. The decision is whether an edge should instantiate a service for TTL $T$ from its local stream and a short-horizon forecast while retaining robustness to prediction error.}
\label{fig:motivation}
\end{figure*}

The literature has made substantial progress on joint service placement, task scheduling, and resource allocation. These formulations typically optimize a coupled, capacity-limited system through centralized optimization or learned policies~\cite{afachao2024efficient,du2025online,wang2025joint}. They address important orchestration objectives but do not isolate the online decision that precedes, or accompanies, capacity admission: given an inactive service at one edge, should the node create a $T$-valid instance despite uncertain future demand? Recent prediction-assisted service-caching and offloading policies likewise exploit time-varying demand~\cite{yang2025service}, but prediction accuracy alone provides no worst-case safeguard. An adversarially optimistic forecast can induce a deployment whose realized workload is insufficient to amortize its cost.

We first isolate an uncapacitated deployment layer. Its decision object is an executable service instance with a prescribed time-to-live (TTL), rather than a cache item whose paid lifetime may be terminated by an unrelated admission. This scope captures serverless warm starts, image prefetching, and elastic edge storage while separating prediction robustness from cache admission, CPU scheduling, and dependency-aware placement. The resulting local decision has an exact economic break-even point: only workload arriving after deployment and before expiry can amortize the instantiation cost. Consequently, aggregating past and predicted workloads into one statistic loses the relevant temporal asymmetry. Section~\ref{sec:capacity} then introduces finite capacity and explicitly permits early eviction without refunding the sunk instantiation cost.

Learning-augmented online algorithms are designed precisely for this setting: they can benefit from accurate advice while retaining an explicit bound under arbitrary advice~\cite{lykouris2018competitive,bamas2020primal}. PFSUM, developed for the variable-price Bahncard problem, combines recent observations with a short-horizon workload prediction~\cite{zhao2024learning}. We show that its two guards have a direct systems interpretation. The historical guard filters a single unsupported optimistic forecast; the future guard tests whether an instance bought now can still generate sufficient savings during its own TTL. Neither guard alone supplies both properties.

The edge setting raises two further questions. First, can every node apply this principle using only local history and local forecasts while retaining a system-wide guarantee? Second, what remains true when a request can be served by a live instance at another edge node? We answer the first through an exact decomposition: under local-only serving and the uncapacitated assumption, every node-service stream is an independent Bahncard instance. We answer the second with a redirect-aware extension whose deployment schedule is intentionally unchanged from the analyzable local policy; redirection is used only when that policy would otherwise forward a request to the origin. This separation is essential because an arbitrary coordinator that suppresses a locally justified deployment changes the online policy and cannot inherit the local competitive proof automatically.

Our contributions are as follows.
\begin{enumerate}
    \item We formulate decentralized TTL service deployment with explicit instantiation, lifetime, request-serving, prediction-error, and redirection semantics, and separate the analyzable elastic layer from finite-capacity admission.
    \item We develop \textsc{CAPSUM-E} as a decentralized safety layer and prove an exact PFSUM reduction, local trace equivalence, and a tight $\kappa(\eta)$ bound. We further derive an offline-cost-weighted aggregate error certificate. Its redirect-aware extension preserves the deployment schedule and is $\kappa(\eta)/\beta$-competitive against the redirectable offline benchmark for $\beta>0$.
    \item We extend the method to finite-capacity nodes as \textsc{CAPSUM}. Its size-scaled dual evidence, local shadow price, and evidence-density eviction make every action capacity feasible; we prove scale invariance and exact agreement with \textsc{CAPSUM-E} in the elastic configuration. The finite-capacity policy is not claimed to inherit the PFSUM ratio.
    \item We build a reproducible simulator with an exact local offline dynamic program, direct common-model baselines, and documented source-derived adapters for EDP-A, OREO, and uEDC-L. The 895-run evaluation mirrors the claims: exact-offline and prediction-error studies test the elastic mechanism, whereas ablation, synthetic-regime, causal-trace, and joint-scale studies test finite-capacity admission.
\end{enumerate}

The remainder of this paper is organized as follows. Section~\ref{sec:related} reviews related work. Section~\ref{sec:problem} defines the deployment model. Section~\ref{sec:algorithm-analysis} presents \textsc{CAPSUM-E} and its local-model analysis. Section~\ref{sec:capacity} gives the capacity-aware extension, and Section~\ref{sec:redirectable} considers request redirection. Section~\ref{sec:evaluation} reports the reproducible evaluation, and Section~\ref{sec:conclusion} concludes.

\section{Related Work}
\label{sec:related}
The closest research spans edge caching, executable-service placement, and
learning-augmented online algorithms. These areas share terminology but not
decision semantics: the analyzable instance is committed for a fixed TTL and
cannot discount prior requests, whereas finite-capacity cache entries may be
evicted early and therefore require a separate model.

\subsection{Edge Caching and Service Placement}

Collaborative data caching jointly selects storage locations and retrieval
paths~\cite{xia2021online}; context-aware D2D caching adds online learning under
storage and energy budgets~\cite{xia2024context}; and LRU-BaSE distinguishes
byte- and object-miss objectives for variable-size content~\cite{wang2024belady}.
EDP-A places erasure-coded blocks near requesting users while preserving
recoverability~\cite{luo2025placement}. Mean-field and DRL policies address
large caching populations and adaptive content demand~\cite{feng2023meanfield,wang2024ice}.
TMC service-caching formulations additionally model shared VM, computation, and
bandwidth resources~\cite{xu2023collaborative}, Bayesian pricing
\cite{tutuncuoglu2024optimal}, cold-start-aware retention~\cite{xiao2024cold},
and popularity or failure uncertainty~\cite{luo2025cost}. These objectives are
capacity coupled; hit rate or forecast accuracy alone does not characterize
size-scaled instantiation cost. We therefore compare EDP-A and uEDC-L only in
the finite-capacity track using the audited adapters in Section~\ref{sec:evaluation}.

Executable-service systems optimize a broader orchestration problem. TPDS work
studies distributed serverless DAG embedding, FaaS cost-performance selection,
and SLO-aware layer sharing~\cite{deng2022dependent,lin2023faas,cheng2024slo}.
TSC formulations cover FaaS provisioning~\cite{ascigil2022resource}, cross-edge
container caching and routing~\cite{chen2024cross}, hierarchical online MEC
control~\cite{du2025online}, joint caching/offloading and resource allocation
\cite{tang2025collaborative}, and separate data and service placement
\cite{wang2025joint}. Redundant placement addresses service reliability
\cite{zhao2020distributed}, while OREO combines Lyapunov optimization and Gibbs
sampling for joint caching and offloading~\cite{xu2018joint}. Learning-based
placement can exploit telemetry and graph structure~\cite{qiao2019deep,Hou2022Graph,goudarzi2021distributed},
but distributional performance does not imply pathwise safety under arbitrary
forecast error. Our elastic model instead isolates when an otherwise feasible
local instance should be created; hard resource coupling is handled separately
by \textsc{CAPSUM} and is not covered by the elastic theorem.

\subsection{Learning-Augmented Online Algorithms}

Learning-augmented algorithms quantify both the benefit of accurate advice and
safety under arbitrary advice~\cite{lykouris2018competitive,bamas2020primal},
including results for caching, metric task systems, and parsimonious prediction
access~\cite{antoniadis2020online,im2022parsimonious}. PFSUM treats continuous
time and heterogeneous prices in the variable-price Bahncard problem, buying a
$T$-valid discount only when recent and predicted future costs both reach the
break-even point~\cite{zhao2024learning}. We prove the exact node-service
mapping rather than assuming this result transfers to edge orchestration,
establish trajectory equivalence for decentralized execution, and preserve the
trajectory under redirection. The capacity-aware policy remains explicitly
outside this separable guarantee.

\section{Problem Formulation and System Model}
\label{sec:problem}

\subsection{System Architecture and Scope}

We consider a set of edge nodes $\mathcal{N}$, with $N:=|\mathcal{N}|$, and a catalog $\mathcal{S}$ of stateless, executable services. A service can represent a containerized microservice or a serverless function that must be instantiated before it can serve a request at an edge node. The origin server stores every service and is always available. Time is continuous.

\begin{assumption}[Uncapacitated deployment layer]
\label{asm:uncapacitated}
An edge node can host every service that it elects to deploy during the horizon of interest. Thus, a deployment of $(n,s)$ neither evicts nor changes the serving cost of another node-service pair.
\end{assumption}

This assumption deliberately isolates the online deployment decision from cache admission, CPU scheduling, and placement under hard capacity constraints. It is appropriate when the deployment layer is elastic or when these resources are provisioned separately. It is also the property that makes the local model separable; a finite-capacity extension is not covered by the guarantees below.

\subsection{Request and Cost Model}

A finite request sequence is $\sigma=\langle\sigma_1,\ldots,\sigma_L\rangle$, where $\sigma_i=(t_i,n_i,s_i,w_i)$ has time $t_i$, ingress node $n_i\in\mathcal{N}$, requested service $s_i\in\mathcal{S}$, and regular service cost $w_i>0$. Requests are revealed in strictly increasing time order. The regular cost aggregates the resource and network cost of serving the request from the origin.

Instantiating a service at an edge node costs $C>0$. If an instance of $s_i$ is active at $n_i$ when $\sigma_i$ arrives, the request is served locally at cost $\beta w_i$, where $0\leq\beta<1$. Otherwise the request is served by the origin at cost $w_i$. Hence $(1-\beta)w_i$ is the saving delivered by a local instance on request $i$. The analysis permits heterogeneous positive $w_i$; no bounded-request-cost assumption is required.

\subsection{Deployment State and Objective}

\begin{definition}[Deployment state]
\label{def:deployment}
A deployment action $(n,s,\tau)$ instantiates service $s$ at node $n$ at time $\tau$ and remains active on $[\tau,\tau+T)$, where $T>0$ is a fixed time-to-live (TTL). A service is \emph{active} at $(n,t)$ if one of the algorithm's deployment actions for $(n,s)$ covers $t$.
\end{definition}

Let $\mathcal{D}_{\mathcal A}(\sigma)$ be the deployment actions made by an online algorithm $\mathcal A$. It incurs
\begin{equation}
\label{eq:total-cost}
 \mathcal A(\sigma)
 =C|\mathcal{D}_{\mathcal A}(\sigma)|
 +\sum_{i=1}^{L}\mathrm{cost}_{\mathcal A}(\sigma_i),
\end{equation}
where
\begin{equation}
\label{eq:per-request-cost}
\mathrm{cost}_{\mathcal A}(\sigma_i)=
\begin{cases}
\beta w_i, & \text{if $s_i$ is active at $n_i$ at time $t_i$},\\
w_i, & \text{otherwise}.
\end{cases}
\end{equation}
A deployment may be made immediately before serving a request, so that request receives the discounted cost. Overlapping deployments of the same pair are never useful. Moreover, an offline optimum can, without loss of generality, deploy only immediately before requests for an inactive pair: delaying any other deployment until the next such request cannot increase its cost.

The local-service offline benchmark is
\begin{equation}
\label{eq:opt-cost}
\mathrm{OPT}^{\mathrm{loc}}(\sigma)
:=\min_{\mathcal D}
\left\{C|\mathcal D|+\sum_{i=1}^{L}
\mathrm{cost}_{\mathcal D}(\sigma_i)\right\},
\end{equation}
where $\mathcal D$ ranges over valid TTL deployment schedules and all inactive requests are served by the origin.

\subsection{Learning-Augmented Information}
\label{sec:la-framework}

For a node-service pair $(n,s)$, define its total regular workload in an interval $I$ as
\begin{equation}
\label{eq:true-workload}
 c_{n,s}(\sigma;I):=
 \sum_{\sigma_j\in\sigma:\,t_j\in I,\,n_j=n,\,s_j=s}w_j.
\end{equation}
Whenever an algorithm encounters a request for an inactive local pair $(n_i,s_i)$ at time $t_i$, it receives a nonnegative prediction
$\hat c_{n_i,s_i}(\sigma;[t_i,t_i+T))$ of the workload in the following TTL window. The predictor is a black box and may be different across nodes.

Let $\mathcal I_{\mathcal A}(\sigma)$ be the indices of requests that arrive when $(n_i,s_i)$ is inactive under $\mathcal A$. Write $\hat F_i:=\hat c_{n_i,s_i}(\sigma;[t_i,t_i+T))$ and $F_i:=c_{n_i,s_i}(\sigma;[t_i,t_i+T))$. The realized error of $\mathcal A$ is
\begin{equation}
\label{eq:prediction-error}
 \eta_{\mathcal A}(\sigma):=
 \max_{i\in\mathcal I_{\mathcal A}(\sigma)}
 \left|\hat F_i-F_i\right|,
\end{equation}
with $\max\varnothing:=0$. This is the same short-horizon, workload-valued error measure used by PFSUM~\cite{zhao2024learning}; it is not a count of mispredicted requests.

The economic break-even workload is
\begin{equation}
\label{eq:break-even}
 \gamma:=\frac{C}{1-\beta}.
\end{equation}
A deployment at time $t$ can recoup its purchase cost over $[t,t+T)$ exactly when the regular workload in that interval reaches $\gamma$.

\subsection{Competitive Criterion and Problem Statement}

For an algorithm $\mathcal A$, define
\begin{equation}
\label{eq:competitive-ratio}
 \mathrm{CR}_{\mathcal A}(\eta):=
 \sup_{\substack{\sigma:\,\eta_{\mathcal A}(\sigma)\leq\eta\\
                  \mathrm{OPT}^{\mathrm{loc}}(\sigma)>0}}
 \frac{\mathcal A(\sigma)}{\mathrm{OPT}^{\mathrm{loc}}(\sigma)}.
\end{equation}
An algorithm is $\delta$-consistent if $\mathrm{CR}_{\mathcal A}(0)\leq\delta$ and $\vartheta$-robust if $\mathrm{CR}_{\mathcal A}(\eta)\leq\vartheta$ for every $\eta\geq0$.

Our first objective is a decentralized policy that uses only the local history and local predictions of each $(n,s)$ pair and attains a tight consistency-robustness trade-off under the local-service model. We then consider a redirectable-service extension in which an inactive local request may be served by a live remote instance. The latter has a stronger offline benchmark and therefore requires a separate analysis.

\section{Algorithm Design and Theoretical Analysis}
\label{sec:algorithm-analysis}

\subsection{Elastic CAPSUM Core}
\label{subsec:algorithm-design}

\textsc{CAPSUM-E} runs the PFSUM purchasing rule independently at every node-service pair. On a request for an inactive local pair, it deploys only when both the workload observed in the preceding TTL window and the predicted workload in the following TTL window reach $\gamma$. The historical window includes the current request and all previous requests, whether or not they were served locally. This detail is required by the PFSUM analysis. Figure~\ref{fig:architecture} depicts the resulting node-local execution path and the redirect-aware overlay.

\begin{figure*}[t]
\centering
\begin{tikzpicture}[x=1cm,y=1cm,font=\sffamily\fontsize{7.2}{8.4}\selectfont]
  \node[papersoft, text width=22mm, minimum height=12mm] (request) at (1.05,1.65)
    {\textbf{Request at $n$}\\[-1pt]$(t,n,s,w)$};
  \node[paperbox, text width=25mm, minimum height=12mm] (state) at (4.10,1.65)
    {\textbf{Pair-local state}\\[-1pt]history queue; expiry};
  \node[paperbox, text width=25mm, minimum height=11mm] (predictor) at (4.10,3.10)
    {\textbf{Local predictor}\\[-1pt]$\hat c_{n,s}([t,t+T))$};
  \node[paperaccent, text width=30mm, minimum height=15mm] (decision) at (8.00,1.65)
    {\textbf{Dual-evidence gate}\\[-1pt]$H=c_{n,s}((t-T,t])$\\[-1pt]deploy iff $H,\hat F\geq\gamma$};
  \node[papergreen, text width=24mm, minimum height=11mm] (local) at (12.70,2.85)
    {\textbf{Local instance}\\[-1pt]start TTL; cost $\beta w$};
  \node[paperbox, text width=27mm, minimum height=11mm] (directory) at (12.05,0.38)
    {\textbf{Optional redirect overlay}\\[-1pt]query verified live instances};
  \node[papersoft, text width=22mm, minimum height=10mm] (remote) at (16.15,1.72)
    {\textbf{Remote edge}\\[-1pt]nearest live copy};
  \node[paperorange, text width=22mm, minimum height=10mm] (origin) at (16.15,0.38)
    {\textbf{Origin}\\[-1pt]fallback cost $w$};
  \draw[paperarrow] (request) -- (state);
  \draw[paperarrow] (state) -- (decision);
  \draw[paperarrow] (request.north) |- (predictor.west);
  \draw[paperarrow] (predictor.east) -| (decision.north);
  \coordinate (deployturn) at (10.15,2.85);
  \draw[paperarrow] (decision.east)
    -- (10.15,1.65) -- (deployturn)
    -- node[midway,above=2pt,fill=white,inner sep=1pt,font=\sffamily\fontsize{6.8}{7.8}\selectfont] {deploy} (local.west);
  \coordinate (nodeployturn) at (8.00,0.38);
  \draw[paperarrow] (decision.south) -- (nodeployturn)
    -- node[midway,above=2pt,fill=white,inner sep=1pt,font=\sffamily\fontsize{6.8}{7.8}\selectfont] {no deploy} (directory.west);
  \draw[paperarrow] (directory.north) -- (12.05,1.72)
    -- node[midway,above=2pt,fill=white,inner sep=1pt,font=\sffamily\fontsize{6.8}{7.8}\selectfont] {verified} (remote.west);
  \draw[paperarrow] (directory.east)
    -- node[midway,above=2pt,fill=white,inner sep=1pt,font=\sffamily\fontsize{6.8}{7.8}\selectfont] {fallback} (origin.west);
  \node[fit=(state)(predictor)(decision), draw=black!45, dashed, rounded corners=3pt, inner sep=4pt,
        label={[font=\sffamily\fontsize{6.8}{7.8}\selectfont]above:node-local safety path}] {};
\end{tikzpicture}
\caption{Decentralized execution architecture. \textsc{CAPSUM-E} needs only the local stream, local state, and a local forecast. \textsc{CAPSUM-E+} consults the directory only after the local dual-threshold rule elects not to deploy, so the analytically required local deployment schedule is preserved.}
\label{fig:architecture}
\end{figure*}

\begin{algorithm}[t]
\caption{\textsc{CAPSUM-E} at node $n_i$ upon request $\sigma_i$}
\label{alg:capsum-e}
\KwIn{$\sigma_i=(t_i,n_i,s_i,w_i)$; $C>0$, $0\leq\beta<1$, and $T>0$}
Update the local sliding-window history for $(n_i,s_i)$ with $\sigma_i$\;
$H\gets c_{n_i,s_i}(\sigma;(t_i-T,t_i])$\;
\uIf{$s_i$ is active at $n_i$}{
    Serve $\sigma_i$ locally at cost $\beta w_i$\;
}
\Else{
    Query $\hat F\gets\hat c_{n_i,s_i}(\sigma;[t_i,t_i+T))$\;
    \uIf{$H\geq\gamma$ \textbf{and} $\hat F\geq\gamma$}{
        Deploy $s_i$ at $n_i$ for TTL $T$ and pay $C$\;
        Serve $\sigma_i$ locally at cost $\beta w_i$\;
    }
    \Else{
        Forward $\sigma_i$ to the origin at cost $w_i$\;
    }
}
\end{algorithm}

The policy requires no inter-node communication. A node can maintain each recent workload with a timestamped queue and a running sum, giving $O(1)$ amortized history-update time per request; the prediction-query cost depends on the deployed predictor. The memory required by exact queues is proportional to the number of locally observed requests in the trailing window, and can be traded for time-bucketed counters if an approximation is acceptable.

\subsection{Design Rationale}
\label{subsec:design-rationale}

The two guards have distinct roles. The future guard asks whether an instance purchased now can plausibly amortize its cost during its own validity interval. The past guard prevents the policy from purchasing solely because of an arbitrary optimistic forecast. The windows are deliberately half-open: the current request belongs to both $H=c((t-T,t])$ and the true future workload $c([t,t+T))$, while a request exactly at $t+T$ belongs to neither active interval of an instance bought at $t$. This convention is identical to the TTL semantics in Definition~\ref{def:deployment}.

\begin{proposition}[Why the historical guard is necessary]
\label{prop:history-guard}
The forecast-only threshold rule that deploys whenever
$\hat c_{n,s}([t,t+T))\geq\gamma$ has no bounded competitive ratio under
arbitrary predictions.
\end{proposition}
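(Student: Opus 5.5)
We exhibit, for every bound $M>0$, a request sequence together with adversarial predictions on which the forecast-only rule pays more than $M\cdot\mathrm{OPT}^{\mathrm{loc}}(\sigma)$. The construction uses a single node-service pair $(n,s)$. By Assumption~\ref{asm:uncapacitated} and the definition of the local benchmark in~\eqref{eq:opt-cost}, all other pairs can be left empty, so the pair is isolated.

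\textbf{The sequence.} Fix a small request cost $\epsilon>0$ and place $K$ requests, each with $w_j=\epsilon$. Consecutive requests are spaced more than $T$ apart, for instance at times $t_j=j\cdot 2T$. The realized future workload at each request is then $F_j=\epsilon$, and the historical workload $c_{n,s}((t_j-T,t_j])$ is also $\epsilon$.

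\textbf{The predictor.} At every request for an inactive pair, the adversary reports $\hat F_j=\gamma$, which is admissible because predictions are an arbitrary nonnegative black box. Since no instance can cover two requests spaced $2T$ apart, the pair is inactive at every request. The forecast-only rule therefore deploys $K$ times and pays at least $KC$.

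\textbf{The offline benchmark.} Serving everything from the origin costs $K\epsilon$, so $\mathrm{OPT}^{\mathrm{loc}}(\sigma)\le K\epsilon$. It is also positive, so the instance is included in the supremum~\eqref{eq:competitive-ratio}. The ratio is at least $KC/(K\epsilon)=C/\epsilon$, which exceeds any $M$ once $\epsilon<C/M$. Alternatively one can fix $\epsilon$ and use a single request; either way the supremum over $\eta$ is unbounded. The error here is $\eta=\gamma-\epsilon$, so the competitive ratio is unbounded already for $\eta$ near $\gamma$. This is consistent with the statement, which concerns arbitrary predictions, i.e., robustness.

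\textbf{Main obstacle.} The only care needed is to make sure the unboundedness is not an artifact of vanishing costs in a way the model forbids. The model requires $w_i>0$ and no lower bound, so $\epsilon\to0$ is legitimate. If a reader prefers fixed costs, a stronger variant keeps $w_j=1$ with $\gamma\gg1$ and shows ratio $\ge C/1$. This is unbounded as $C$ grows with $\beta$ fixed, while the PFSUM-style ratio $1/\beta$ or $\kappa(\eta)$ stays independent of $C$. I would present the $\epsilon$ version and remark on the scaling variant. Finally, I would note that the history guard blocks this example, since $H=\epsilon<\gamma$, which is precisely what the proposition is meant to motivate.
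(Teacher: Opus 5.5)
Your construction is correct and is essentially the paper's proof: the paper uses a single request of cost $\varepsilon$ with forecast $\gamma$, so the rule pays $C+\beta\varepsilon$ against $\mathrm{OPT}^{\mathrm{loc}}=\varepsilon$, and your $K$ requests spaced $2T$ apart are $K$ independent copies of that instance (the ratio $C/\epsilon$ does not depend on $K$). One small slip: in your aside, it is $K=1$ you may fix, not $\epsilon$ --- unboundedness still requires $\epsilon\to0$ (or, equivalently by scaling, $C\to\infty$).
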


\begin{proof}
Consider a single request of cost $\varepsilon>0$ at time $t$ and let the predictor return $\hat c([t,t+T))=\gamma$. The policy deploys and pays $C+\beta\varepsilon$, while $\mathrm{OPT}^{\mathrm{loc}}$ forwards the request and pays $\varepsilon$. Letting $\varepsilon$ tend to zero makes the ratio unbounded. In contrast, \textsc{CAPSUM-E} does not deploy because its historical workload is $H=\varepsilon<\gamma$.
\end{proof}

Past and future workloads are not summed because elapsed requests cannot benefit
from an instance bought now. If $W$ is the realized workload in $[t,t+T)$,
deployment changes the interval cost by $C-(1-\beta)W$ and breaks even exactly
at $W=\gamma$. The forecast estimates this unknown $W$; the historical guard
independently rejects unsupported optimism. A timestamped queue maintains the
half-open historical window in $O(1)$ amortized update time.

This separation also explains why a single weighted threshold does not, in
general, preserve the two certificates: the historical term can compensate for
insufficient post-deployment savings, or the forecast term can compensate for
an empty history. Requiring each quantity to cover the same economic threshold
preserves the temporal interpretation of both tests. At an
inactive decision point, the required state is therefore minimal: an
active-until timestamp, the exact sliding-window sum, and one pair-local
forecast. No node needs another node's history, forecast, or active set for the
elastic local-service rule.

\subsection{Exact Reduction to Independent Bahncard Instances}
\label{subsec:reduction}

For the reduction, request $(t_i,n,s,w_i)$ becomes a Bahncard ticket of price
$p_i=w_i$; deployment is a card of price $C$, validity $T$, and discount
$\beta$. The ticket price is $w_i$, not the saving $(1-\beta)w_i$, so prediction
and error retain the workload units of \eqref{eq:prediction-error}.

For every $(n,s)$, let $\sigma^{(n,s)}$ contain precisely the requests in $\sigma$ with ingress node $n$ and service $s$. Assumption~\ref{asm:uncapacitated} and local-only serving imply
\begin{align}
\textsc{CAPSUM-E}(\sigma)
 &=\sum_{n\in\mathcal N}\sum_{s\in\mathcal S}
   \textsc{CAPSUM-E}_{n,s}(\sigma^{(n,s)}), \label{eq:cost-decomp}\\
\mathrm{OPT}^{\mathrm{loc}}(\sigma)
 &=\sum_{n\in\mathcal N}\sum_{s\in\mathcal S}
   \mathrm{OPT}^{\mathrm{loc}}_{n,s}(\sigma^{(n,s)}). \label{eq:opt-decomp}
\end{align}

\begin{lemma}[PFSUM reduction]
\label{lem:variable-bahncard}
For a fixed pair $(n,s)$, identify every request $(t_i,n,s,w_i)$ with a Bahncard request $(t_i,p_i)$ of price $p_i=w_i$. Then the actions and costs of $\textsc{CAPSUM-E}_{n,s}$ are exactly those of PFSUM for $\mathrm{BP}(C,\beta,T)$.
\end{lemma}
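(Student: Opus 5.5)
The plan is to prove the lemma by induction over the requests of $\sigma^{(n,s)}$ in time order, maintaining a simulation invariant between $\textsc{CAPSUM-E}_{n,s}$ and PFSUM on the Bahncard instance $\mathrm{BP}(C,\beta,T)$ with tickets $(t_i,p_i)=(t_i,w_i)$. The invariant is: before request $i$, both algorithms have made the same purchase times, so the expiry time of the current instance equals the expiry of the current card. Assumption~\ref{asm:uncapacitated} and local-only serving matter here, because requests of other pairs cannot change the state, history window, or serving cost of $(n,s)$. Hence $\textsc{CAPSUM-E}_{n,s}$ is a function of $\sigma^{(n,s)}$ and of the predictions issued for that pair alone.

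For the inductive step I will match the objects one by one. First, activity: by Definition~\ref{def:deployment} a service bought at $\tau$ is active on $[\tau,\tau+T)$, and a $T$-valid Bahncard bought at $\tau$ has the same half-open validity. Under the invariant, $s$ is therefore active at $t_i$ exactly when PFSUM holds a valid card. Second, the two tests. The historical quantity $H=c_{n,s}(\sigma;(t_i-T,t_i])$ is, after the identification $p_j=w_j$, the total ticket price in PFSUM's past window $(t_i-T,t_i]$. This window includes the current request and all earlier requests whether they were discounted or not, as the paper stresses. The prediction $\hat F_i$ is exactly PFSUM's predicted future ticket cost on $[t_i,t_i+T)$. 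Third, the threshold: $\gamma=C/(1-\beta)$ in \eqref{eq:break-even} equals PFSUM's break-even price sum $C/(1-\beta)$. Because prices are $w_i$ rather than savings, the units agree, and so does the error measure \eqref{eq:prediction-error}.

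With these identifications, Algorithm~\ref{alg:capsum-e} buys at $t_i$ if and only if PFSUM buys at $t_i$, so the invariant is preserved. The costs then agree term by term. Each deployment costs $C$, matching the card price. Each request costs $\beta w_i$ if it is covered, which includes a purchase made immediately before serving it, and costs $w_i$ otherwise, matching the discounted or full ticket price. Summing these gives equality of the totals in \eqref{eq:total-cost}.

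The main obstacle is the boundary semantics, not the algebra. I must confirm that PFSUM, as specified in~\cite{zhao2024learning}, uses the same open/closed endpoints: a past window $(t-T,t]$, a future window $[t,t+T)$, and a card purchased at a request that discounts that same request. I must also confirm that it queries a prediction only when no card is valid. If PFSUM's published convention differs at a measure-zero endpoint, I would argue that requests occur at strictly increasing times and that only the inclusion of the current request and the exclusion of a request at exactly $t+T$ matter. I would then check these two cases explicitly against the paper's stated convention.
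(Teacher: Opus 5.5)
Your proposal is correct and follows the paper's proof. The paper likewise identifies a live instance with a $T$-valid card of price $C$ and discount $\beta$, sets each ticket price to $w_i$, and observes that Algorithm~\ref{alg:capsum-e}'s two tests against $\gamma=C/(1-\beta)$ are exactly PFSUM's rule; your induction on purchase times just spells out what the paper leaves implicit. One small caution: your ``measure-zero'' fallback would not save an exact identity if the endpoint conventions ever differed, since adversarial requests can sit exactly at $t-T$ (a case your fallback omits) or at $t+T$. It is not needed, because the model adopts PFSUM's half-open windows by design.
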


\begin{proof}
A live local service instance is a valid Bahncard: it costs $C$, remains valid for $T$, and changes a request price from $w_i$ to $\beta w_i$. At every regular request, Algorithm~\ref{alg:capsum-e} tests $c(\sigma^{(n,s)};(t_i-T,t_i])\geq\gamma$ and the predicted $T$-future cost against the same $\gamma=C/(1-\beta)$, which is precisely the PFSUM rule. PFSUM is defined for continuous time and heterogeneous request prices, so no additional variable-weight extension is needed~\cite{zhao2024learning}.
\end{proof}

\begin{proposition}[Local trace equivalence]
\label{prop:trace-equivalence}
Fix a request sequence and the predictor response returned at each decision point. A centralized implementation that maintains a separate PFSUM state for every $(n,s)$ and processes the same requests produces exactly the same deployments, local/origin serving actions, and total cost as the decentralized execution of \textsc{CAPSUM-E}.
\end{proposition}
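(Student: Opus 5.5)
The plan is an induction over the global request order. The invariant is that, just before each request is processed, the two executions hold identical pair-local state for every pair $(n,s)$. The pair-local state consists of the active-until timestamp, the timestamped history queue restricted to $\sigma^{(n,s)}$, and the running window sum. The key structural fact is that Algorithm~\ref{alg:capsum-e} reads and writes only the state of the pair $(n_i,s_i)$ of the current request. It reads the history $H=c_{n_i,s_i}(\sigma;(t_i-T,t_i])$, the activity of $s_i$ at $n_i$, and one forecast for that pair. It writes only the expiry of that pair, and only when it deploys. Under Assumption~\ref{asm:uncapacitated} and local-only serving, no action on one pair changes the feasibility, activity, or serving cost of another. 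The centralized implementation, which keeps a separate PFSUM state per pair, has the same read/write footprint by construction.

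First I will check the base case. Before any request, every pair is inactive with an empty history in both executions.

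For the inductive step, take request $\sigma_i$ and assume the invariant holds. I will compare the two executions piece by piece.
\begin{itemize}
\item \emph{History.} Both executions append $\sigma_i$ to the same queue of pair $(n_i,s_i)$ and evict the same entries older than $t_i-T$. They therefore compute the same $H$. This holds because $H$ is a function only of the multiset of pair-$(n_i,s_i)$ requests in $(t_i-T,t_i]$, and both executions have observed exactly the same prefix of $\sigma^{(n_i,s_i)}$.
\item \emph{Activity.} Both executions hold the same active-until timestamp for the pair, so the test ``$s_i$ active at $n_i$'' has the same outcome.
\item \emph{Forecast.} If the pair is inactive, the statement fixes the predictor response at each decision point, so both executions receive the same $\hat F$.
\item \emph{Decision.} Lemma~\ref{lem:variable-bahncard} identifies the dual test $H\ge\gamma \wedge \hat F\ge\gamma$ with the PFSUM rule on $\mathrm{BP}(C,\beta,T)$. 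Hence the deploy/no-deploy decision, the serving action (local at $\beta w_i$ or origin at $w_i$), and the updated expiry all coincide.
\item \emph{Other pairs.} Every pair $(n,s)\neq(n_i,s_i)$ is untouched in both executions. Its activity status is a pure function of its own expiry timestamp and the clock, so it stays synchronized even though time advances.
\end{itemize}
This re-establishes the invariant.

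Having matched every deployment and every per-request serving action, I will sum costs via \eqref{eq:total-cost}. This sum equals \eqref{eq:cost-decomp} applied to each execution, so the total costs agree.

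The main obstacle is the ``decision-point'' subtlety in the predictor clause. The set of points at which a forecast is queried depends on the trajectory, since queries occur only at inactive requests. If this set differed between the executions, the phrase ``the same predictor response'' would be ill-defined. The induction resolves this, because it shows the inactive/active status agrees at each step before the query is made. I will therefore state the invariant so that it includes the activity status, and treat the predictor as an oracle indexed by (request index, pair). A second point to state explicitly is that a possibly nondeterministic predictor or cross-pair concurrency creates no ordering issue: requests arrive in strictly increasing time, so the global order is total, and each node's local sub-order is its restriction.
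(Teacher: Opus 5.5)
Your proposal is correct and follows essentially the same route as the paper. Both argue by induction over the strictly ordered requests that the pair-local state (active-until time, sliding-window workload, and the fixed predictor response) is identical in the two executions, since under Assumption~\ref{asm:uncapacitated} no action on one pair touches another, and both then sum identical costs. Your explicit treatment of the trajectory-dependence of the decision points is a useful clarification that the paper leaves implicit: you resolve it by placing activity status in the invariant.
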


\begin{proof}
The state used to decide a request $(t_i,n_i,s_i,w_i)$ consists only of the active-until time, the sliding-window workload, and the predictor response for $(n_i,s_i)$. By Assumption~\ref{asm:uncapacitated}, an action for another pair changes none of these quantities. Induction over the strictly ordered requests therefore gives identical local states before every request and identical actions thereafter. Summing the identical per-request and deployment costs proves the claim.
\end{proof}

Thus decentralization exactly factorizes the centralized policy; the conclusion
fails once one pair changes another pair's feasibility or service cost.

For later use, let $\eta_{n,s}$ denote the maximum error in \eqref{eq:prediction-error} over decision points of the $(n,s)$ substream. The PFSUM result gives the following function:
\begin{equation}
\label{eq:kappa}
\kappa(\eta)=
\begin{cases}
\dfrac{2\gamma+(2-\beta)\eta}
      {(1+\beta)\gamma+\beta\eta}, & 0\leq\eta\leq\gamma,\\[8pt]
\dfrac{(3-\beta)\gamma+\eta}
      {(1+\beta)\gamma+\beta\eta}, & \eta>\gamma.
\end{cases}
\end{equation}

\begin{lemma}[Per-pair guarantee]
\label{lem:per-subproblem}
For every $(n,s)$,
\begin{equation}
\textsc{CAPSUM-E}_{n,s}(\sigma^{(n,s)})
\leq\kappa(\eta_{n,s})\,
\mathrm{OPT}^{\mathrm{loc}}_{n,s}(\sigma^{(n,s)}).
\label{eq:per-sub-guarantee}
\end{equation}
The bound is tight for a single pair.
\end{lemma}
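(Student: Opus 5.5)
The proof is a transfer argument: the substream $\sigma^{(n,s)}$ is itself a variable-price Bahncard instance, so PFSUM's prediction-error-dependent competitive theorem~\cite{zhao2024learning} applies to it. The work is in checking that each side of that theorem's inequality coincides with the corresponding edge quantity.

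\textbf{Step 1 (costs).} By Lemma~\ref{lem:variable-bahncard}, with $p_i=w_i$, the actions and total cost of $\textsc{CAPSUM-E}_{n,s}$ on $\sigma^{(n,s)}$ equal those of PFSUM on $\mathrm{BP}(C,\beta,T)$. By Proposition~\ref{prop:trace-equivalence}, the decentralized run restricted to $(n,s)$ is the run PFSUM would produce on $\sigma^{(n,s)}$ alone.

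\textbf{Step 2 (benchmarks).} I will show $\mathrm{OPT}^{\mathrm{loc}}_{n,s}(\sigma^{(n,s)})$ equals the offline Bahncard optimum. Under Assumption~\ref{asm:uncapacitated} and local-only serving, any valid TTL schedule for the pair is a set of card purchases with validity $[\tau,\tau+T)$. Conversely, any card schedule is a deployment schedule. In both settings an uncovered request costs $w_i$ and a covered one costs $\beta w_i$, so the two feasible sets and objectives coincide.

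\textbf{Step 3 (error measures).} I will check that $\eta_{n,s}$ is exactly the error PFSUM's bound is stated in. The decision points are the requests arriving when the pair is inactive, the predicted quantity is the workload over $[t_i,t_i+T)$ in price units, and the error is the maximum absolute deviation. This is why the reduction sets $p_i=w_i$ rather than the saving. Since PFSUM's bound is monotone in the error parameter, substituting $\eta_{n,s}$ gives \eqref{eq:per-sub-guarantee} with $\kappa$ from \eqref{eq:kappa}.

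\textbf{Tightness.} I will rely on PFSUM's lower-bound construction, a sequence and predictor response achieving ratio $\kappa(\eta)$ for every $\eta$, asymptotically if needed. That construction is a single Bahncard instance with arbitrary continuous times and positive prices. Placing all its tickets at one node $n$ for one service $s$ gives a valid edge trace, with the same predictor responses supplied at the same decision points. By Steps~1--3, both the algorithm cost and the benchmark are preserved, so the ratio attains $\kappa(\eta)$ and the bound cannot be improved for a single pair.

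\textbf{Main obstacle.} I expect the main difficulty to be Step~3 together with the tightness transfer. The edge error is defined only over decision points at which \textsc{CAPSUM-E} itself is inactive, and the prediction oracle is queried only there. I must confirm that PFSUM's error is defined over the same algorithm-dependent index set, not over all requests or over OPT's decision points. Otherwise the bound could be over- or under-stated.

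I will also check the boundary conventions. The half-open windows $(t-T,t]$ and $[t,t+T)$ must match PFSUM's. A deployment immediately before a request must discount that request, in the same way a card bought at a ticket discounts it.
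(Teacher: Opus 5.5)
Your proposal is correct and matches the paper's proof: the paper likewise uses Lemma~\ref{lem:variable-bahncard} to identify the pair's subproblem with PFSUM on $\mathrm{BP}(C,\beta,T)$, and then takes both the bound $\kappa(\eta_{n,s})$ and its tightness from the PFSUM competitive-ratio theorem~\cite{zhao2024learning}. Your explicit checks are the details the paper's one-line citation leaves implicit: that the offline benchmarks coincide, that the error is taken over the algorithm's own inactive decision points in workload units, and that the half-open window conventions match (the paper asserts these in Sections~\ref{sec:la-framework} and~\ref{subsec:design-rationale}).
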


\begin{proof}
Lemma~\ref{lem:variable-bahncard} maps the subproblem exactly to PFSUM. The stated expression and tightness follow from the PFSUM competitive-ratio theorem~\cite{zhao2024learning}.
\end{proof}

\subsection{Global Guarantee}
\label{subsec:main-theorem}

\begin{proposition}[Properties of $\kappa$]
\label{prop:kappa-properties}
The function $\kappa$ is continuous and strictly increasing on $[0,\infty)$. Moreover, $\kappa(0)=2/(1+\beta)$, and, for $\beta>0$, $\lim_{\eta\to\infty}\kappa(\eta)=1/\beta$.
\end{proposition}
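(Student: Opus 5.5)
The plan is a direct computation on the two branches of \eqref{eq:kappa}, which are linear-fractional functions of $\eta$. Write each branch as $\kappa(\eta)=(a+b\eta)/(c+d\eta)$ with $c=(1+\beta)\gamma>0$ and $d=\beta\geq 0$, so each denominator is positive on $[0,\infty)$. Then each branch is continuous there, and its derivative has the sign of the determinant $bc-ad$.

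The steps, in order, are as follows.

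\textbf{Continuity at the junction.} At $\eta=\gamma$ the first numerator is $2\gamma+(2-\beta)\gamma=(4-\beta)\gamma$. The second numerator is $(3-\beta)\gamma+\gamma=(4-\beta)\gamma$. The denominators coincide, so the branches agree and $\kappa$ is continuous on $[0,\infty)$.

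\textbf{Monotonicity of the first branch.} Here $a=2\gamma$ and $b=2-\beta$. The determinant is
\[
(2-\beta)(1+\beta)\gamma-2\gamma\beta=\gamma(2-\beta-\beta^2)=\gamma(1-\beta)(2+\beta),
\]
which is positive for $0\leq\beta<1$.

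\textbf{Monotonicity of the second branch.} Here $a=(3-\beta)\gamma$ and $b=1$. The determinant is
\[
(1+\beta)\gamma-(3-\beta)\gamma\beta=\gamma(1-2\beta+\beta^2)=\gamma(1-\beta)^2,
\]
which is also positive.

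Each branch is therefore strictly increasing on its own interval. Together with continuity at $\gamma$, this makes $\kappa$ strictly increasing on all of $[0,\infty)$.

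\textbf{Endpoint values.} Setting $\eta=0$ gives $\kappa(0)=2\gamma/((1+\beta)\gamma)=2/(1+\beta)$. For $\beta>0$, dividing the second branch by $\eta$ gives
\[
\lim_{\eta\to\infty}\frac{(3-\beta)\gamma/\eta+1}{(1+\beta)\gamma/\eta+\beta}=\frac{1}{\beta}.
\]
When $\beta=0$ the same expression grows linearly in $\eta$, which is consistent with the restriction to $\beta>0$.

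I expect no real obstacle. The only points needing care are checking that $\beta<1$ makes both determinants strictly positive, and noting that $\gamma=C/(1-\beta)>0$ keeps the denominators positive.
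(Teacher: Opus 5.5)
Your proposal is correct and follows the paper's proof essentially step for step. Both check that the branches agree at $\eta=\gamma$, obtain the positive derivative numerators $(1-\beta)(2+\beta)\gamma$ and $(1-\beta)^2\gamma$, and get the endpoint values by direct substitution, noting that $\kappa$ is unbounded when $\beta=0$; your version additionally spells out why the denominators are positive and how the two strictly increasing pieces join at $\gamma$.
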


\begin{proof}
The branches agree at $\eta=\gamma$; differentiation gives positive numerators
$(2+\beta)(1-\beta)\gamma$ and $(1-\beta)^2\gamma$. Substitution gives the
endpoints; for $\beta=0$, $\kappa$ is unbounded.
\end{proof}

\begin{theorem}[Local-model consistency and robustness]
\label{thm:main}
Let $\eta:=\eta_{\textsc{CAPSUM-E}}(\sigma)$. Then
\begin{equation}
\label{eq:main-cr}
\frac{\textsc{CAPSUM-E}(\sigma)}
     {\mathrm{OPT}^{\mathrm{loc}}(\sigma)}
\leq\kappa(\eta).
\end{equation}
The bound is tight. Thus, for $\beta>0$, \textsc{CAPSUM-E} is $2/(1+\beta)$-consistent and $1/\beta$-robust.
\end{theorem}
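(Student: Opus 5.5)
The plan is to combine the exact decomposition \eqref{eq:cost-decomp}--\eqref{eq:opt-decomp} with the per-pair guarantee of Lemma~\ref{lem:per-subproblem} and the monotonicity of $\kappa$ from Proposition~\ref{prop:kappa-properties}. The first step relates the global error to the per-pair errors. By Proposition~\ref{prop:trace-equivalence}, the set of inactive decision points $\mathcal I_{\textsc{CAPSUM-E}}(\sigma)$ is exactly the disjoint union over $(n,s)$ of the inactive decision points of the substreams $\sigma^{(n,s)}$. Hence $\eta_{n,s}\le\eta$ for every pair, and since $\kappa$ is increasing, $\kappa(\eta_{n,s})\le\kappa(\eta)$.

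Next we sum Lemma~\ref{lem:per-subproblem} over all pairs:
\[
\textsc{CAPSUM-E}(\sigma)=\sum_{n,s}\textsc{CAPSUM-E}_{n,s}(\sigma^{(n,s)})
\le\sum_{n,s}\kappa(\eta_{n,s})\,\mathrm{OPT}^{\mathrm{loc}}_{n,s}(\sigma^{(n,s)})
\le\kappa(\eta)\,\mathrm{OPT}^{\mathrm{loc}}(\sigma).
\]
The last step uses \eqref{eq:opt-decomp} and the nonnegativity of each per-pair optimum. Pairs with an empty substream contribute zero to both sides. Dividing by $\mathrm{OPT}^{\mathrm{loc}}(\sigma)>0$ gives \eqref{eq:main-cr}.

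For tightness, we place the PFSUM lower-bound instance for a given $\eta$ on a single pair $(n,s)$ and send no requests to any other pair. Lemma~\ref{lem:per-subproblem} asserts tightness for a single pair. The decomposition makes both global costs equal to the per-pair costs, and the global error equals $\eta_{n,s}$. So the supremum in \eqref{eq:competitive-ratio} reaches $\kappa(\eta)$, at least in the limit.

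The consistency and robustness claims then follow from Proposition~\ref{prop:kappa-properties}. At $\eta=0$ the bound is $\kappa(0)=2/(1+\beta)$. For any $\eta$, $\kappa(\eta)<\lim_{\eta\to\infty}\kappa=1/\beta$ when $\beta>0$, because $\kappa$ is strictly increasing.

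The main subtlety is the first step. The error $\eta$ is defined along the algorithm's own trajectory, so we need the per-pair decision points to coincide with the global ones; trace equivalence supplies exactly this. There is a second point to check: the PFSUM tight instance must achieve the supremum with realized error at most $\eta$ rather than exactly $\eta$. Since $\kappa$ is monotone, this does not affect the bound.
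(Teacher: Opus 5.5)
Your proposal is correct and follows essentially the same route as the paper. Both proofs use $\eta_{n,s}\le\eta$, monotonicity of $\kappa$, the per-pair Lemma~\ref{lem:per-subproblem}, and the decomposition \eqref{eq:cost-decomp}--\eqref{eq:opt-decomp} for the upper bound, then a single-pair tight instance for tightness. Your proposal only adds explicit justifications (via trace equivalence and Proposition~\ref{prop:kappa-properties}) that the paper leaves implicit.
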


\begin{proof}
Since $\eta_{n,s}\leq\eta$, monotonicity, Lemma~\ref{lem:per-subproblem}, and
\eqref{eq:cost-decomp}--\eqref{eq:opt-decomp} give
\[
\textsc{CAPSUM-E}(\sigma)
\leq\sum_{n,s}\kappa(\eta_{n,s})\mathrm{OPT}^{\mathrm{loc}}_{n,s}
\leq\kappa(\eta)\mathrm{OPT}^{\mathrm{loc}}(\sigma).
\]
Tightness follows from a tight single-pair instance.
\end{proof}

\begin{corollary}[Instance-dependent aggregate certificate]
\label{cor:aggregate-certificate}
For a nonempty trace, define
$\lambda_{n,s}:=\mathrm{OPT}^{\mathrm{loc}}_{n,s}(\sigma^{(n,s)})/
\mathrm{OPT}^{\mathrm{loc}}(\sigma)$. Then
\begin{equation}
\label{eq:aggregate-certificate}
\frac{\textsc{CAPSUM-E}(\sigma)}{\mathrm{OPT}^{\mathrm{loc}}(\sigma)}
\leq\sum_{n\in\mathcal N}\sum_{s\in\mathcal S}
\lambda_{n,s}\kappa(\eta_{n,s})
\leq\kappa\!\left(\max_{n,s}\eta_{n,s}\right).
\end{equation}
\end{corollary}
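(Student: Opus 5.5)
The corollary follows from the decompositions \eqref{eq:cost-decomp}--\eqref{eq:opt-decomp}, Lemma~\ref{lem:per-subproblem}, and the monotonicity in Proposition~\ref{prop:kappa-properties}. The plan is to divide the per-pair bound by the global optimum and then compare against the largest per-pair error.

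First we make sure the weights are well defined. For a nonempty trace every request has $w_i>0$. In any schedule each request costs at least $\min\{w_i,\,C\}>0$, since it is either forwarded at cost $w_i$ or covered by a deployment that costs $C>0$. Hence $\mathrm{OPT}^{\mathrm{loc}}(\sigma)>0$, so each $\lambda_{n,s}$ is well defined. Each $\lambda_{n,s}$ is nonnegative (and zero for pairs with an empty substream), and by \eqref{eq:opt-decomp} the weights satisfy $\sum_{n,s}\lambda_{n,s}=1$.

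For the first inequality, start from \eqref{eq:cost-decomp}, which writes $\textsc{CAPSUM-E}(\sigma)$ as a finite sum of per-pair costs. Bound each summand with Lemma~\ref{lem:per-subproblem} by $\kappa(\eta_{n,s})\,\mathrm{OPT}^{\mathrm{loc}}_{n,s}(\sigma^{(n,s)})$, and divide through by $\mathrm{OPT}^{\mathrm{loc}}(\sigma)$; this yields $\sum_{n,s}\lambda_{n,s}\kappa(\eta_{n,s})$. Pairs with empty substreams contribute $0$ on both sides: the algorithm never deploys for them, and their optimum is $0$. The sum is finite because only pairs that occur in the trace contribute.

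For the second inequality, let $\bar\eta:=\max_{n,s}\eta_{n,s}$, where the maximum is over the finitely many pairs that occur in the trace. Proposition~\ref{prop:kappa-properties} says $\kappa$ is increasing, so $\kappa(\eta_{n,s})\le\kappa(\bar\eta)$ for every pair. Because the weights are nonnegative and sum to one, the weighted average is at most $\kappa(\bar\eta)$.

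A remark worth adding is that $\bar\eta$ coincides with $\eta_{\textsc{CAPSUM-E}}(\sigma)$. Under local-only serving, the inactive decision points of the global run are exactly the union of the per-pair inactive decision points, by Proposition~\ref{prop:trace-equivalence}. So the corollary refines Theorem~\ref{thm:main}.

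The only step needing care is the bookkeeping: checking that the denominator is positive and that empty pairs are handled consistently. The inequalities themselves are direct.
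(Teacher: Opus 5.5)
Your proposal is correct and follows the paper's own argument. You combine \eqref{eq:cost-decomp} with Lemma~\ref{lem:per-subproblem}, divide by $\mathrm{OPT}^{\mathrm{loc}}(\sigma)$, and then use monotonicity of $\kappa$ together with $\sum_{n,s}\lambda_{n,s}=1$ from \eqref{eq:opt-decomp}. The extra bookkeeping is sound with one wording fix. Positivity of $\mathrm{OPT}^{\mathrm{loc}}(\sigma)$ should be argued for the total cost rather than per request, because a covered request can cost $\beta w_i=0$ and shares its deployment's $C$ with other requests. The correct version: a nonempty trace forces either an origin-served request or at least one deployment, so $\mathrm{OPT}^{\mathrm{loc}}(\sigma)>0$.
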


\begin{proof}
Divide the per-pair sum by $\mathrm{OPT}^{\mathrm{loc}}(\sigma)$ and use
monotonicity with $\sum_{n,s}\lambda_{n,s}=1$.
\end{proof}

The corollary prevents an inaccurate but economically insignificant pair from
determining the trace-level certificate.

\subsection{Interpretation of the Guarantee and Its Scope}
\label{subsec:bound-interpretation}

Prediction quality enters only through the dimensionless error $\eta/\gamma$.
Scaling $C$, all $w_i$, and all predictions by the same positive factor scales
$\gamma$, online cost, and offline cost equally, leaving decisions and the ratio
unchanged. The TTL affects realized workload and forecast difficulty, not the
algebraic form of $\kappa$.

The parameter $\beta$ has a distinct systems interpretation. When $\beta$ is small, a local instance provides a large per-request saving, but missed deployments under poor forecasts may be far from optimal, as reflected by the $1/\beta$ robustness limit. As $\beta$ approaches one, local and origin service become nearly identical; the ratio approaches one because the deployment decision becomes economically degenerate, not because a larger $\beta$ represents a more efficient edge.

The result does not claim that a centralized controller cannot improve a practical metric. It states only that, under Assumption~\ref{asm:uncapacitated} and local-only serving, central knowledge cannot couple otherwise independent subproblems. When services compete for a finite resource budget, \eqref{eq:opt-decomp} no longer holds and neither the reduction nor the global theorem applies unchanged.

\section{Capacity-Aware Local Admission}
\label{sec:capacity}

The elastic specialization provides the analytical safety baseline, whereas production edges have finite image-storage, memory, or accelerator reservations. The full \textsc{CAPSUM} policy extends this core with local admission for the coupled setting. It retains the dual-evidence rationale while ensuring capacity feasibility, without extending the separable competitive theorem to a nonseparable model.

\subsection{Capacity Model and Shadow-Priced Evidence}

Let service $s$ consume an integer size $a_s\geq1$ and let node $n$ expose a finite budget $B_n$. At time $t$, its active set $\mathcal A_n(t)$ must obey
\begin{equation}
\label{eq:capacity-constraint}
 \sum_{s\in\mathcal A_n(t)}a_s\leq B_n.
\end{equation}
We charge a size-$a_s$ deployment $C_s=a_sC$, so its unpriced break-even workload is $a_s\gamma$. Let $u_n(t)$ be the left-hand side of \eqref{eq:capacity-constraint} and define the node-local scarcity price
\begin{equation}
\label{eq:shadow-price}
 \lambda_n(t)=\gamma\left(\frac{u_n(t)}{B_n}\right)^2.
\end{equation}
The square is a smooth, monotone engineering choice that keeps the price small at low utilization and raises it near saturation; it is not derived as a dual optimum. An operator may substitute another nonnegative local price curve without changing the feasibility argument below, although empirical behavior may change.

On a miss for $(n,s)$, \textsc{CAPSUM} computes the same historical workload $H$ and forecast $\hat F$ as Algorithm~\ref{alg:capsum-e}. It calls the candidate eligible only if
\begin{equation}
\label{eq:capsum-gate}
 H\geq a_s(\gamma+\lambda_n(t))\quad\text{and}\quad
 \hat F\geq a_s(\gamma+\lambda_n(t)).
\end{equation}
Thus a larger image must justify both its larger instantiation cost and the capacity it consumes. Its evidence density is $r_{n,s}=\min\{H,\hat F\}/a_s$. If there is insufficient free capacity, the node considers active residents in increasing order of their stored admission-evidence density and evicts only residents whose density is strictly below $r_{n,s}$, stopping once $a_s$ units are available. The stored density is the value recorded at the resident's admission and is not refreshed without another request for that pair. Early eviction terminates the remaining paid TTL and yields no refund. If the rule cannot free enough space, the request is forwarded and no deployment cost is paid. The rule uses only node-local state; it neither relies on a global cache map nor hides an infeasible admission.

\subsection{Safety Properties and Scope}

\begin{proposition}[Feasibility and elastic-limit compatibility]
\label{prop:capsum}
Assume $a_s\leq B_n$ for every service eligible at node $n$. \textsc{CAPSUM} maintains \eqref{eq:capacity-constraint} after every request. Define the elastic configuration by $a_s=1$, $B_n\geq|\mathcal S|$, $\lambda_n(t)=0$, and no eviction. In this configuration, \textsc{CAPSUM} makes exactly the same deployment decisions as \textsc{CAPSUM-E} for the same forecasts and request sequence.
\end{proposition}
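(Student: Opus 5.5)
The proof splits into two independent parts: feasibility and elastic-limit agreement.

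\emph{Feasibility.} I would argue by induction over the strictly ordered requests that \eqref{eq:capacity-constraint} holds for every node after each request is processed. Initially every active set is empty, so $u_n=0\leq B_n$. Between requests, the only state changes are TTL expirations, which remove services and can only decrease $u_n(t)$. At a request for $(n,s)$, only node $n$'s active set can change, since the rule uses only node-local state. There are three cases.
\begin{enumerate}
\item If $s$ is already active, or the gate \eqref{eq:capsum-gate} fails, the active set is unchanged.
\item If the candidate is eligible and $B_n-u_n(t)\geq a_s$, admission yields $u_n(t)+a_s\leq B_n$.
\item Otherwise the eviction loop removes residents one at a time. Each removal decreases the usage, and admission is performed only once the freed space satisfies $B_n-u'_n\geq a_s$. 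If the loop exhausts the residents whose stored density is strictly below $r_{n,s}$ without reaching this, the request is forwarded. No deployment is made, and the evictions, if any were committed, only reduced usage.
\end{enumerate}
In every case the post-request usage is at most $B_n$. The assumption $a_s\leq B_n$ is needed only to make admission possible at all; it is not needed for the invariant. One subtlety should be stated explicitly: a forwarded request may leave the active set smaller if the evictions were not rolled back. Either convention preserves feasibility, so the argument is insensitive to the implementation choice.

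\emph{Elastic agreement.} Fix the request sequence and the forecasts. I would show by induction that, before each request, \textsc{CAPSUM} and \textsc{CAPSUM-E} have identical active-until timestamps for every pair, and therefore identical sliding-window histories. The history $H=c_{n,s}((t-T,t])$ counts all requests regardless of how they were served, so it is a function of the request sequence alone and coincides for the two policies. Whether $(n_i,s_i)$ is active is identical by the induction hypothesis.

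On a miss, substituting $a_s=1$ and $\lambda_n\equiv0$ into \eqref{eq:capsum-gate} gives exactly $H\geq\gamma$ and $\hat F\geq\gamma$, which is the test in Algorithm~\ref{alg:capsum-e}. When the candidate is eligible, $u_n(t)\leq|\mathcal S|-1$ because $s$ itself is inactive, and each active service has size one. Hence $B_n-u_n(t)\geq1=a_s$, the eviction branch is never entered (consistent with the ``no eviction'' stipulation), and the deployment proceeds with the same start time $t_i$ and the same TTL $T$. The two policies therefore update the active-until time identically, which closes the induction. Identical serving actions, and hence identical costs, follow as in Proposition~\ref{prop:trace-equivalence}.

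The main obstacle is making the capacity slack bound precise. It requires that no pair ever holds overlapping instances, so that $u_n(t)$ counts distinct active services. This holds because both policies deploy only on a miss.
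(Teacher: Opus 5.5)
Your proposal is correct and follows the same route as the paper. Feasibility is proved by induction over arrivals: admission happens only when the candidate fits, possibly after evictions, and every other action preserves or shrinks the active set. Elastic agreement holds because the gate collapses to $H,\hat F\geq\gamma$ and no capacity rejection or eviction can occur. Your explicit slack bound $u_n(t)\leq|\mathcal S|-1$ at a miss, and your remark that evictions left in place on a forwarded request still preserve feasibility, fill in details the paper only asserts; its statement that all other actions ``preserve'' the active set is slightly loose on exactly that point.
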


\begin{proof}
An admission is made only if the candidate fits in the currently free capacity or if the selected evictions free at least $a_s$ units. All other actions preserve the active set, so induction over request arrivals establishes \eqref{eq:capacity-constraint}. In the stated elastic configuration, the scarcity price is zero and no capacity rejection or eviction can occur. Equation~\eqref{eq:capsum-gate} therefore becomes $H,\hat F\geq\gamma$, exactly the \textsc{CAPSUM-E} gate.
\end{proof}

\begin{proposition}[Scale invariance]
\label{prop:scale-invariance}
If $C$, all workloads, and all forecasts are scaled by any $\zeta>0$, with other inputs fixed, \textsc{CAPSUM-E}, \textsc{CAPSUM-E+}, and \textsc{CAPSUM} preserve every decision and scale all incurred costs by $\zeta$.
\end{proposition}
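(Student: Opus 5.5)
The plan is an induction over the strictly ordered request sequence. The invariant is that the scaled and unscaled executions have identical states before every request. I take the state to consist of the active set with its expiry times, the stored admission densities, and the occupied capacity $u_n(t)$; the last applies to \textsc{CAPSUM}, and for \textsc{CAPSUM-E+} also the directory of live instances. Every threshold and comparison is either invariant or homogeneous of degree one in $\zeta$. Because sizes $a_s$, budgets $B_n$, times, and $T$ are unchanged, the invariant carries through each decision.

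First I record how the derived quantities scale. The break-even workload satisfies $\gamma' = \zeta C/(1-\beta) = \zeta\gamma$, since $\beta$ is fixed. The historical sum $H' = \zeta H$, because \eqref{eq:true-workload} is linear in the $w_j$ and the window $(t-T,t]$ depends only on times. The forecast is scaled by hypothesis, so $\hat F' = \zeta\hat F$. For \textsc{CAPSUM}, by the inductive hypothesis $u_n(t)$ and $B_n$ are unchanged, so \eqref{eq:shadow-price} gives $\lambda_n'(t) = \zeta\lambda_n(t)$. Each gate then compares two quantities that both scale by $\zeta$:
\begin{itemize}
\item the \textsc{CAPSUM-E} test $H\geq\gamma$, $\hat F\geq\gamma$;
\item the gate \eqref{eq:capsum-gate}, $H,\hat F\geq a_s(\gamma+\lambda_n)$.
\end{itemize}
Dividing by $\zeta>0$ shows each inequality holds in the scaled run exactly when it holds in the original. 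The evidence densities satisfy $r' = \zeta r$. Stored densities of residents were recorded at earlier admissions, so by induction they are also uniformly scaled. The eviction rule sorts residents and compares them via strict inequality against $r_{n,s}$, so it selects the same victims in the same order. The free-capacity check involves only $a_s$ and $B_n$, so it is unaffected.

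For \textsc{CAPSUM-E+}, the deployment schedule is that of \textsc{CAPSUM-E}, so it is preserved by the above. The redirect choice depends only on which remote instances are live, which is part of the state and therefore identical. If the redirect overlay uses a per-request remote cost, I will assume that cost also scales linearly with $w_i$, as the redirection model in Section~\ref{sec:redirectable} presumably defines it. This is the step I expect to be the main obstacle: the proposition relies on the redirect cost being homogeneous of degree one in the workloads. Any tie-breaking between remote copies by distance or latency must be cost-independent, and I will check this against the definition.

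Once every decision is identical, the incurred costs follow directly. Each deployment costs $a_sC$, which becomes $\zeta a_sC$. Each request costs $\beta w_i$, $w_i$, or the redirect cost, and each of these is multiplied by $\zeta$. Summing over requests and deployments, the total cost scales by $\zeta$.
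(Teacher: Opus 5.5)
Your proposal is correct and is the paper's argument made explicit: $\gamma$, $H$, $\hat F$, $\lambda_n(t)$, and the stored densities all scale by $\zeta$, so every gate and eviction comparison is unchanged, hence so are the active sets and routing, and deployment and serving costs scale by $\zeta$; your induction on states spells out why $u_n(t)$ and the stored densities stay consistent across the two runs. The step you flagged as the main obstacle is immediate, since \eqref{eq:redirect-cost} gives remote cost $(\beta+\alpha d(n_i,m_i))w_i$ with $\alpha$, $d$, $\beta$ fixed, and Algorithm~\ref{alg:capsum-e-plus} selects $m^*$ by $d$ alone, independent of cost.
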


\begin{proof}
$\gamma$, $H$, $\hat F$, $\lambda_n(t)$, and all evidence densities scale by $\zeta$. Gates and density orderings are unchanged, hence so are active sets and routing; deployment and serving costs scale by $\zeta$.
\end{proof}

Propositions~\ref{prop:capsum} and~\ref{prop:scale-invariance} establish feasibility, elastic-limit reduction, and invariance to the workload-cost unit, not a capacity-competitive ratio. Capacity couples node-service pairs and invalidates the decomposition in \eqref{eq:opt-decomp}; a competitive guarantee for \textsc{CAPSUM} would require an explicitly capacity-aware offline benchmark and a separate analysis. Section~\ref{sec:evaluation} therefore evaluates \textsc{CAPSUM} only in the finite-capacity track and does not use its results to substantiate Theorem~\ref{thm:main}.

\section{Extension to the Redirectable-Service Model}
\label{sec:redirectable}

\subsection{Redirectable-Service Cost Model}
\label{subsec:redir-model}

Suppose that a request arriving at $n$ may be served by an active instance at another node $m$. Let $d$ be a metric on $\mathcal N$, and let $\alpha>0$ convert distance to per-unit-workload routing cost. The cost of assigning request $\sigma_i$ to $m_i\in\mathcal N\cup\{\bot\}$ is
\begin{equation}
\label{eq:redirect-cost}
\mathrm{cost}^{\mathrm{redir}}(\sigma_i)=
\begin{cases}
\beta w_i, & m_i=n_i,\\
(\beta+\alpha d(n_i,m_i))w_i,
  & m_i\in\mathcal N\setminus\{n_i\},\\
w_i, & m_i=\bot.
\end{cases}
\end{equation}
We assume $\alpha d(n,m)\leq1-\beta$ for all $n,m$, so a verified redirection is never more expensive than origin service. An assignment $m_i\in\mathcal N$ is feasible only if service $s_i$ is active at $m_i$ at time $t_i$ under the selected deployment schedule; $m_i=\bot$ denotes origin service. The redirectable offline benchmark jointly chooses valid deployments and feasible assignments:
\begin{equation}
\label{eq:opt-redir}
\mathrm{OPT}^{\mathrm{redir}}(\sigma):=
\min_{\mathcal D,\{m_i\}}
\left\{C|\mathcal D|+\sum_{i=1}^{L}
\mathrm{cost}^{\mathrm{redir}}_{\mathcal D,\{m_i\}}(\sigma_i)\right\}.
\end{equation}

\subsection{\textsc{CAPSUM-E+}: Redirect-Aware Service}
\label{subsec:capsum-e-plus}

\textsc{CAPSUM-E+} retains the local deployment decision of Algorithm~\ref{alg:capsum-e}. If this decision does not deploy an inactive local service, it queries a service directory and redirects to the nearest verified active instance when one exists; otherwise it uses the origin. This ordering is intentional: it makes the local deployment schedule identical to that of \textsc{CAPSUM-E}, enabling a direct comparison while still exploiting redirection whenever the local policy would pay the origin cost.

\begin{algorithm}[t]
\caption{\textsc{CAPSUM-E+} at node $n_i$ upon request $\sigma_i$}
\label{alg:capsum-e-plus}
\KwIn{$\sigma_i=(t_i,n_i,s_i,w_i)$; local predictor; service directory}
Update the local history of $(n_i,s_i)$ and set
$H\gets c_{n_i,s_i}(\sigma;(t_i-T,t_i])$\;
\uIf{$s_i$ is active locally}{
    Serve locally at cost $\beta w_i$\;
}
\Else{
    Query $\hat F\gets\hat c_{n_i,s_i}(\sigma;[t_i,t_i+T))$\;
    \uIf{$H\geq\gamma$ \textbf{and} $\hat F\geq\gamma$}{
        Deploy $s_i$ locally for $T$; pay $C$; serve locally\;
    }
    \Else{
        $\mathcal M\gets\{m\in\mathcal N:s_i\text{ is verified active at }m\}$\;
        \uIf{$\mathcal M\neq\emptyset$}{
            $m^*\gets\arg\min_{m\in\mathcal M}d(n_i,m)$; redirect to $m^*$\;
        }
        \Else{Forward to the origin\;}
    }
}
\end{algorithm}

Its cost is $C|\mathcal D_{\textsc{CAPSUM-E+}}(\sigma)|$ plus the request costs in \eqref{eq:redirect-cost} selected by the algorithm.

\subsection{Competitive Analysis}
\label{subsec:redir-analysis}

\begin{lemma}[Service-cost dominance]
\label{lem:service-dominance}
For every request sequence, \textsc{CAPSUM-E+} has the same local deployment schedule as \textsc{CAPSUM-E} and
\begin{equation}
\textsc{CAPSUM-E+}(\sigma)\leq\textsc{CAPSUM-E}(\sigma).
\label{eq:service-dominance}
\end{equation}
\end{lemma}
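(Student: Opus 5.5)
The plan is to prove both claims by induction over the strictly ordered request sequence, coupling a run of \textsc{CAPSUM-E+} with a run of \textsc{CAPSUM-E} on the same $\sigma$. We fix the predictor response at each decision point, as in Proposition~\ref{prop:trace-equivalence}. The induction hypothesis is that, before request $\sigma_i$ is processed, both runs have the same active-until timestamp for every pair $(n,s)$. They also have the same pair-local sliding-window history, since the history $H=c_{n,s}(\sigma;(t_i-T,t_i])$ counts every request regardless of how it was served.

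For the inductive step we compare the two algorithms line by line on $\sigma_i$. The history update and $H$ are identical, because they depend only on $\sigma$. The local activity test uses the shared active-until time, so both runs take the same branch. If the service is active locally, both serve at $\beta w_i$ and make no deployment. Otherwise both query the same $\hat F$ and apply the same gate $H\geq\gamma$ and $\hat F\geq\gamma$. If the gate passes, both deploy at $t_i$, pay $C$, and serve at $\beta w_i$.

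If the gate fails, \textsc{CAPSUM-E} pays $w_i$. \textsc{CAPSUM-E+} either redirects to some $m^*\neq n_i$ at cost $(\beta+\alpha d(n_i,m^*))w_i$ or falls back to the origin at cost $w_i$. By the standing assumption $\alpha d(n,m)\leq 1-\beta$, the redirect cost is at most $w_i$.

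The only point needing care, and the one I expect to be the main obstacle, is showing that redirection never feeds back into deployment state. A redirect must not create, extend, or alter any instance at $m^*$ or at $n_i$. It must also not modify the history that a later decision at $m^*$ consults. This holds because Algorithm~\ref{alg:capsum-e-plus} updates only the history of the ingress pair $(n_i,s_i)$, exactly as Algorithm~\ref{alg:capsum-e} does. Serving at $m^*$ only reads the directory, and under Assumption~\ref{asm:uncapacitated} it consumes no capacity that could change feasibility. Hence the active-until times and histories after $\sigma_i$ coincide in both runs, which closes the induction and gives identical deployment sets $\mathcal D$.

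Finally we sum the costs. Both runs pay $C|\mathcal D|$ for deployments. For every request, the per-request cost of \textsc{CAPSUM-E+} is at most that of \textsc{CAPSUM-E}: the two are equal on local service and origin service, and a redirect is no more expensive than the origin by the bound above. Summing over $i$ yields \eqref{eq:service-dominance}.
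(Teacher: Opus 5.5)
Your proposal is correct and follows essentially the same argument as the paper. You induct over arrivals to show that both runs have identical local histories and active states, and hence identical deployments, and then compare costs request by request using $\alpha d(n,m)\le 1-\beta$. Your explicit check that a redirect neither alters any instance nor enters another pair's ingress-based history is a useful detail that the paper's proof leaves implicit.
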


\begin{proof}
By induction over request arrivals, the two algorithms have the same local history and the same active state for every pair. They therefore make the same deployment decision at every inactive local request. If that decision does not deploy, \textsc{CAPSUM-E} uses the origin, whereas \textsc{CAPSUM-E+} uses either the origin or a verified remote instance whose cost is no greater by the routing-overhead assumption. Deployment costs are identical.
\end{proof}

\begin{lemma}[Relation between offline benchmarks]
\label{lem:projection}
For $\beta>0$,
\begin{equation}
\mathrm{OPT}^{\mathrm{loc}}(\sigma)
\leq\frac{1}{\beta}\,
\mathrm{OPT}^{\mathrm{redir}}(\sigma).
\label{eq:projection}
\end{equation}
\end{lemma}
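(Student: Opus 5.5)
Take the trivial local schedule that never deploys. It pays exactly $\sum_i w_i$, so $\mathrm{OPT}^{\mathrm{loc}}(\sigma)\leq\sum_{i=1}^{L}w_i$. The proof then reduces to a lower bound on the redirectable optimum, namely $\mathrm{OPT}^{\mathrm{redir}}(\sigma)\geq\beta\sum_i w_i$, and chaining the two inequalities gives \eqref{eq:projection}.

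For the lower bound, fix any feasible pair $(\mathcal D,\{m_i\})$ in \eqref{eq:opt-redir} and inspect each request's cost from \eqref{eq:redirect-cost} case by case.
\begin{itemize}
\item If $m_i=n_i$, the cost is $\beta w_i$.
\item If $m_i\in\mathcal N\setminus\{n_i\}$, the cost is $(\beta+\alpha d(n_i,m_i))w_i\geq\beta w_i$, since $\alpha>0$ and $d\geq0$.
\item If $m_i=\bot$, the cost is $w_i\geq\beta w_i$, since $\beta<1$.
\end{itemize}
Every request therefore contributes at least $\beta w_i$. The deployment term $C|\mathcal D|$ is nonnegative, so the objective of every feasible solution is at least $\beta\sum_i w_i$. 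Taking the minimum preserves this bound.

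Combining, $\mathrm{OPT}^{\mathrm{loc}}(\sigma)\leq\sum_i w_i\leq\frac1\beta\mathrm{OPT}^{\mathrm{redir}}(\sigma)$. The hypothesis $\beta>0$ is used only to divide by $\beta$. The routing-overhead assumption $\alpha d\leq1-\beta$ is not needed here. No step is a genuine obstacle; the only point requiring care is to confirm that the origin-only schedule is a valid member of the local benchmark's feasible set, which holds with $\mathcal D=\varnothing$.

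I would also remark that a sharper argument, projecting the redirect optimum onto a local schedule, is unnecessary, because the crude bound already attains the stated constant.
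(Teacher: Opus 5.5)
Your proof is correct, but it takes a different route from the paper's. The paper projects an arbitrary feasible redirectable solution onto a local one. It keeps the deployments and sends every redirected request to the origin. It then checks that each cost component (deployment, local, origin, redirected) grows by at most a factor $1/\beta$, and minimizes. You never relate the two schedules to each other. Instead you sandwich both optima around the origin-only cost: the empty schedule gives $\mathrm{OPT}^{\mathrm{loc}}(\sigma)\le\sum_i w_i$, and the per-request floor $\beta w_i$ in \eqref{eq:redirect-cost} gives $\mathrm{OPT}^{\mathrm{redir}}(\sigma)\ge\beta\sum_i w_i$.

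Your version is shorter, and it shows that $1/\beta$ is just the origin-to-local price ratio and owes nothing to the structure of redirection. As a by-product, it shows that the never-deploy policy is already $1/\beta$-competitive against $\mathrm{OPT}^{\mathrm{redir}}$. That is a stronger guarantee than the $\kappa(\eta)/\beta\ge 2/(\beta(1+\beta))$ that Theorem~\ref{thm:capsum-e-plus} certifies for \textsc{CAPSUM-E+}.

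The paper's projection keeps the optimal deployment schedule and changes only the redirected requests, so it is the argument that can be sharpened. Apply it to an optimal redirectable solution with redirected set $R^\ast$ and targets $m_i^\ast$. It gives $\mathrm{OPT}^{\mathrm{loc}}(\sigma)\le\mathrm{OPT}^{\mathrm{redir}}(\sigma)+\sum_{i\in R^\ast}(1-\beta-\alpha d(n_i,m_i^\ast))w_i$. Under the routing-overhead assumption it also gives the multiplicative factor $1/(\beta+\alpha d_{\min})$, where $d_{\min}$ is the smallest distance between distinct nodes. Your argument drops $C|\mathcal D|$ and links the two optima only through $\sum_i w_i$, so it cannot produce these instance-dependent refinements.
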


\begin{proof}
Take any feasible redirectable schedule and retain its deployments. Replace every redirected request by origin service, retaining local and origin assignments. For a redirected request, $w_i\leq(\beta+\alpha d(n_i,m_i))w_i/\beta$; for local and origin requests the same factor is also an upper bound. Deployment costs obey $C\leq C/\beta$. The constructed local schedule therefore costs at most $1/\beta$ times the redirectable schedule. Minimizing both sides proves the claim.
\end{proof}

\begin{theorem}[Redirectable-model guarantee]
\label{thm:capsum-e-plus}
For $\beta>0$ and $\eta=\eta_{\textsc{CAPSUM-E+}}(\sigma)$,
\begin{equation}
\label{eq:capsum-e-plus-cr}
\frac{\textsc{CAPSUM-E+}(\sigma)}
     {\mathrm{OPT}^{\mathrm{redir}}(\sigma)}
\leq\frac{\kappa(\eta)}{\beta}.
\end{equation}
Consequently, \textsc{CAPSUM-E+} is $2/(\beta(1+\beta))$-consistent and $1/\beta^2$-robust with respect to the redirectable offline benchmark.
\end{theorem}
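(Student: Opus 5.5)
The proof chains three earlier results: Lemma~\ref{lem:service-dominance}, Theorem~\ref{thm:main}, and Lemma~\ref{lem:projection}.

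\textbf{Step 1: align the error measures.} Lemma~\ref{lem:service-dominance} says \textsc{CAPSUM-E+} has the same local deployment schedule as \textsc{CAPSUM-E}, and the same active state for every pair at every request. So the set of inactive-local decision points $\mathcal I$ is identical for the two algorithms, and the predictor is queried at the same $(n_i,s_i,t_i)$. Both $\hat F_i$ and the true $F_i=c_{n_i,s_i}(\sigma;[t_i,t_i+T))$ depend only on the pair's stream, not on how requests are routed. Hence
\[
\eta_{\textsc{CAPSUM-E+}}(\sigma)=\eta_{\textsc{CAPSUM-E}}(\sigma)=\eta .
\]
Note that redirection changes serving costs but never local activity, so it cannot alter $\mathcal I$.

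\textbf{Step 2: chain the inequalities.} Suppose first that $\mathrm{OPT}^{\mathrm{loc}}(\sigma)>0$. Then
\[
\textsc{CAPSUM-E+}(\sigma)\leq\textsc{CAPSUM-E}(\sigma)\leq\kappa(\eta)\,\mathrm{OPT}^{\mathrm{loc}}(\sigma)\leq\frac{\kappa(\eta)}{\beta}\,\mathrm{OPT}^{\mathrm{redir}}(\sigma).
\]
The first inequality is \eqref{eq:service-dominance}, the second is Theorem~\ref{thm:main}, and the third is Lemma~\ref{lem:projection}, which needs $\beta>0$.

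\textbf{Step 3: degenerate cases.} If the trace is empty, both costs are zero and nothing needs proving. Otherwise every $w_i>0$ and $\beta>0$, so every request costs at least $\beta w_i>0$ under any schedule. Thus $\mathrm{OPT}^{\mathrm{redir}}(\sigma)>0$ and $\mathrm{OPT}^{\mathrm{loc}}(\sigma)>0$, the ratio is well defined, and Theorem~\ref{thm:main} applies.

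\textbf{Step 4: consistency and robustness.} Dividing the chain by $\mathrm{OPT}^{\mathrm{redir}}(\sigma)$ gives \eqref{eq:capsum-e-plus-cr}. Proposition~\ref{prop:kappa-properties} then supplies the endpoint values. At $\eta=0$ we get $\kappa(0)/\beta=2/(\beta(1+\beta))$. Because $\kappa$ is increasing with supremum $1/\beta$, we get $\kappa(\eta)/\beta\leq 1/\beta^2$ for every $\eta\geq0$.

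The main care point is Step 1. The bound in Theorem~\ref{thm:main} is stated in terms of \textsc{CAPSUM-E}'s own error, so the identity of the decision sets must be justified explicitly. The rest is routine composition. No tightness is claimed, so none needs proving.
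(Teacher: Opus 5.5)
Your proposal is correct and follows the paper's proof. The paper likewise notes that the decision points, and hence the prediction error, coincide for \textsc{CAPSUM-E+} and \textsc{CAPSUM-E}, and then chains Lemma~\ref{lem:service-dominance}, Theorem~\ref{thm:main}, and Lemma~\ref{lem:projection}; your positivity check on $\mathrm{OPT}^{\mathrm{loc}}$ and $\mathrm{OPT}^{\mathrm{redir}}$ and your derivation of the endpoints from Proposition~\ref{prop:kappa-properties} fill in details the paper leaves implicit.
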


\begin{proof}
The decision points, and hence the prediction error, are the same for \textsc{CAPSUM-E+} and \textsc{CAPSUM-E}. Combine Lemma~\ref{lem:service-dominance}, Theorem~\ref{thm:main}, and Lemma~\ref{lem:projection}.
\end{proof}

\subsection{Practical Considerations}
\label{subsec:practical}

The guarantee treats directory lookup and control traffic separately from request-serving cost; their latency and bandwidth should be measured in the prototype evaluation. A stale directory entry must be verified before redirecting, with origin fallback on failure, to retain the service-cost dominance argument. A coordinator that suppresses a locally justified deployment can be a useful engineering heuristic for reducing replicas, but it changes the PFSUM schedule. Such a variant should be reported as a separate heuristic and should not be claimed to satisfy Theorem~\ref{thm:capsum-e-plus} without a new proof.
\newcommand{\ExperimentRuns}{895}
\newcommand{\CAPSUMBurstyCost}{0.765}
\newcommand{\CAPSUMBestGain}{42.9\%}
\newcommand{\CAPSUMEExactCost}{0.785}
\newcommand{\CAPSUMEBadCost}{0.808}
\newcommand{\FSUMBadCost}{0.940}
\newcommand{\RobustGain}{14.0\%}
\newcommand{\OverBiasGain}{13.2\%}
\newcommand{\PeakBurstGain}{39.9\%}
\newcommand{\GlobusTasks}{13,543}
\newcommand{\GlobusNodes}{20}
\newcommand{\GlobusServices}{217}
\newcommand{\GlobusDays}{212.9}
\newcommand{\GlobusCAPSUMCost}{0.725}
\newcommand{\GlobusCAPSUMGain}{45.5\%}

\section{Experimental Results}
\label{sec:evaluation}

We evaluate \textsc{CAPSUM-E}, the analyzable elastic specialization, separately
from finite-capacity \textsc{CAPSUM}, so capacity results are not presented as
evidence for the PFSUM guarantee. The same reproducible pipeline generates
every table and vector figure from retained per-seed results.

\subsection{Questions, Workloads, and Metrics}

Q1--Q2 test whether the elastic guard tracks the local offline optimum and
remains robust to prediction error. Q3--Q4 compare finite-capacity
\textsc{CAPSUM} with audited baselines and explain its cost anatomy. Q5--Q8
examine sensitivity, component necessity, capacity--quality interactions, and
scalability, while Q9 tests transfer to a public FaaS trace without forecast
leakage. The redirect extension is an analytical schedule-preservation result;
all finite-capacity runs exercise the same common redirect overlay and report
its remote-serving cost.

Synthetic traces contain 72 hours of ordered requests over a $9$-node grid and
60 services, with $w\in[70,130]$, $a_s=1+(s\bmod3)$, and stationary, regional
flash-crowd, or moving-hotspot demand. Background demand is Zipf-distributed
(default exponent 1.12); each flash-crowd seed has 10,464 requests. Defaults
are $T=6$ hours, $C=1000$, $\beta=0.25$, $\alpha=0.12$, and $B_n=9$.

The main, prediction, bias, and ablation studies use five seeds
$\{7,17,29,43,61\}$; sensitivity and interaction points use three. The
\ExperimentRuns{} policy runs span theory, main comparisons, prediction error
and bias, ablation, parameter and economic stress, public trace, and one- and
two-dimensional scaling; the manifest gives exact per-study counts. Error bars are two-sided 95\%
Student-$t$ intervals across seeds (or five chronological trace blocks), and paired
policies receive identical event streams and seeded random choices wherever
the algorithms permit. Thus paired gaps suppress trace-generation variance;
the intervals quantify run-to-run variation rather than population sampling
uncertainty. Trace-block intervals are descriptive because adjacent
chronological blocks need not be independent.

The primary metric is total deployment, local, remote, and origin cost divided
by origin-only cost; Proposition~\ref{prop:scale-invariance} shows that this
normalization is independent of the workload-cost unit. We also report edge-service rate, cost composition,
deployments, evictions, empirical local ratio, and decision-path time.
In every finite-capacity run, a local miss is redirected to the nearest active
copy when one exists and otherwise goes to the origin; this common serving
overlay does not alter any policy's placement decisions. The configured
$\alpha$ satisfies the routing-cost condition in Section~\ref{sec:redirectable}.

The synthetic studies use the controlled predictor
\begin{equation}
 \hat F=\max\{0,F+(2U-1)(1-q)\gamma\},\qquad U\sim\mathrm{Uniform}(0,1),
\end{equation}
using a deterministic event-indexed draw. Thus $q=1$ is exact and
$\eta/\gamma$ is controlled in the workload units of
Theorem~\ref{thm:main}; $q\in\{1,.8,.6,.4,.2,0\}$. Because this predictor is
constructed from the realized future $F$, it is an oracle-style mechanism for
controlled error injection, not a deployable forecaster. The public-trace
study instead uses the strictly causal predictor described below.

\subsection{Implementation and Reproducibility}

The dependency-light Python implementation processes every trace in stable
timestamp order and uses explicit tie breaking for deterministic policy
decisions. For the elastic theory track, it computes an exact offline optimum
independently for every node--service stream. Let $j(i)$ be the first request
at or after $t_i+T$ and let $D_i$ denote optimal suffix cost. A two-pointer
scan and suffix workload sums evaluate
\begin{equation}
 D_i=\min\!\left\{w_i+D_{i+1},
 C+\beta\!\sum_{k=i}^{j(i)-1}w_k+D_{j(i)}\right\},
 \label{eq:offline-dp}
\end{equation}
with $D_{L+1}=0$. The request-time deployment property in
Section~\ref{sec:problem} makes this dynamic program exact; it is used only for
the local uncapacitated ratio and never as a finite-capacity oracle.

One driver regenerates the manifest, per-seed rows, tables, and figures; an
audit checks counts, cost decomposition, ranges, empirical PFSUM-bound
conformance, and the trace checksum. Tests cover
TTL boundaries, the dual gate, the dynamic program against brute force,
capacity, renewal, prediction, and adapter paths. Decision-path time brackets policy execution
with a monotonic nanosecond clock; trace generation, the offline oracle, file
I/O, and figure construction are excluded. Consequently, the timing results
compare controller computation, not network, transfer, or cold-start latency.
Each policy--workload run is timed once; seed variation is not a microbenchmark
repeat. Runs used CPython 3.12.13 on a 15-core Apple M5 Pro with 48~GB memory.

\subsection{Baseline Fidelity}

We compare directly against SUM, FSUM, LRU, LFU, and Random in the common TTL
model. We also include source-derived adapters motivated by
EDP-A~\cite{luo2025placement}, OREO~\cite{xu2018joint}, and
uEDC-L~\cite{luo2025cost}, representing popularity-based placement,
energy-aware joint caching/offloading, and failure-robust caching. Their native
decision objects and objectives differ from executable TTL-bound services, so
an exact reproduction cannot simultaneously preserve the source problem and
produce actions in our model. Table~\ref{tab:baseline-fidelity} therefore states
every heuristic translation. These adapters are auditable comparison points,
not native reproductions of the three source systems.

\begin{table*}[t]
\centering\scriptsize
\caption{Direct baselines and explicitly scoped source-derived adapters.}
\label{tab:baseline-fidelity}
\setlength{\tabcolsep}{3.5pt}
\begin{tabularx}{\textwidth}{@{}p{0.10\textwidth}p{0.27\textwidth}p{0.34\textwidth}X@{}}
\toprule
\textbf{Policy} & \textbf{Source concept used} & \textbf{Common-model implementation} & \textbf{Unpreserved native semantics} \\
\midrule
EDP-A~\cite{luo2025placement} & Popularity order and proximity coverage. & Greedily cover prior-epoch demand within 1.01 grid hops. & No erasure code, $K$-block recovery, or remaining-block pass. \\
OREO~\cite{xu2018joint} & Slot observation, virtual queue, drift scoring, and Gibbs proposals. & One-hour slots; one queue; 80 proposals; $V=4$; temperature .08; target .58. & Replaces native offloading, radio/computation latency, and energy. \\
uEDC-L~\cite{luo2025cost} & Popularity/failure uncertainty and redundancy. & Score $d+.35|d-d'|+.25d_{\rm nbr}$; 0/1 knapsack; two-copy repair. & Not the source robust program or continuous retrieval recourse. \\
\bottomrule
\end{tabularx}
\end{table*}

The native objectives are not directly comparable. Each adapter instead
produces a deployment and routing sequence, to which the common accounting
layer applies the same $a_sC$, $\beta w$, remote, and origin costs. A placement
is free only until its paid TTL expires; continuing it after expiry incurs
another $a_sC$, and early eviction yields no refund. The OREO-derived adapter
observes a complete current slot before acting, matching its source information
boundary but giving it more within-slot information than the causal epoch
adapters. Epoch-based placement uses only completed past epochs. Every
pseudorandom choice is seeded. Direct-baseline definitions and full adapter
pseudocode appear in the artifact; adapter parameters are fixed across traces.

\begin{table*}[t]
\centering\scriptsize
\caption{Finite-capacity common-model comparison (mean $\pm$ 95\% Student-$t$ interval over five seeds). Cost is normalized by origin-only service; lower is better. Best value in each row is bold.}\label{tab:main-capacity}
\setlength{\tabcolsep}{3.0pt}
\begin{tabular}{llrrrrrrr}
\toprule \textbf{Workload} & \textbf{Metric} & \textbf{CAPSUM} & \textbf{EDP-A} & \textbf{OREO} & \textbf{uEDC-L} & \textbf{LRU} & \textbf{LFU} & \textbf{Random} \\
\midrule
Stationary & Norm. cost & $\mathbf{0.895 \pm 0.005}$ & 1.365 $\pm$ 0.006 & 2.910 $\pm$ 0.039 & 1.350 $\pm$ 0.005 & 10.654 $\pm$ 0.281 & 8.795 $\pm$ 0.187 & 11.378 $\pm$ 0.240 \\
 & Edge rate (\%) & 33.6 $\pm$ 1.6 & 74.6 $\pm$ 0.7 & 67.3 $\pm$ 1.6 & 70.4 $\pm$ 0.7 & $\mathbf{100.0 \pm 0.0}$ & $\mathbf{100.0 \pm 0.0}$ & $\mathbf{100.0 \pm 0.0}$ \\
Flash crowd & Norm. cost & $\mathbf{0.765 \pm 0.005}$ & 1.340 $\pm$ 0.004 & 2.989 $\pm$ 0.058 & 1.344 $\pm$ 0.010 & 7.556 $\pm$ 0.107 & 7.353 $\pm$ 0.093 & 8.312 $\pm$ 0.160 \\
 & Edge rate (\%) & 49.7 $\pm$ 0.8 & 76.5 $\pm$ 0.4 & 62.7 $\pm$ 1.7 & 70.4 $\pm$ 1.7 & $\mathbf{100.0 \pm 0.0}$ & $\mathbf{100.0 \pm 0.0}$ & $\mathbf{100.0 \pm 0.0}$ \\
Moving hotspot & Norm. cost & $\mathbf{0.726 \pm 0.004}$ & 1.233 $\pm$ 0.011 & 2.647 $\pm$ 0.083 & 1.248 $\pm$ 0.011 & 6.613 $\pm$ 0.109 & 6.608 $\pm$ 0.137 & 7.389 $\pm$ 0.131 \\
 & Edge rate (\%) & 56.8 $\pm$ 0.8 & 75.3 $\pm$ 1.5 & 64.9 $\pm$ 2.3 & 67.5 $\pm$ 1.8 & $\mathbf{100.0 \pm 0.0}$ & $\mathbf{100.0 \pm 0.0}$ & $\mathbf{100.0 \pm 0.0}$ \\
\bottomrule\end{tabular}\end{table*}

\begin{table}[t]
\centering\scriptsize
\caption{Elastic CAPSUM-E validation at $q=0.7$ (mean $\pm$ 95\% Student-$t$ interval; five seeds). Best value in each column is bold.}\label{tab:theory-track}
\begin{tabular}{lrr}
\toprule \textbf{Policy} & \textbf{Norm. cost} & $\mathbf{\mathcal{A}/\mathrm{OPT}^{\mathrm{loc}}}$ \\
\midrule
CAPSUM-E & 0.789 $\pm$ 0.005 & 1.078 $\pm$ 0.004 \\
SUM & 0.821 $\pm$ 0.006 & 1.122 $\pm$ 0.005 \\
FSUM & $\mathbf{0.756 \pm 0.002}$ & $\mathbf{1.034 \pm 0.004}$ \\
No deployment & 1.000 $\pm$ 0.000 & 1.367 $\pm$ 0.007 \\
\bottomrule\end{tabular}\end{table}

\begin{table}[t]
\centering\scriptsize
\caption{CAPSUM ablation (mean normalized cost; five seeds). Minimum in each column is bold; deployment count is diagnostic.\label{tab:capsum-ablation}}
\setlength{\tabcolsep}{3.2pt}\begin{tabular}{lrrrr}
\toprule \textbf{Variant} & \textbf{Steady} & \textbf{Flash} & \textbf{Mobile} & \textbf{Flash deploys} \\
\midrule
Full & 0.895 & 0.765 & 0.726 & \textbf{98.4} \\
History only & 0.906 & 0.796 & 0.751 & 147.0 \\
Forecast only & \textbf{0.891} & 0.766 & \textbf{0.715} & 161.6 \\
No shadow price & 0.896 & \textbf{0.762} & 0.722 & 101.0 \\
\bottomrule\end{tabular}\end{table}

\begin{table}[t]
\centering\scriptsize
\caption{Globus results over five post-burn-in chronological blocks. Student-$t$ intervals describe temporal variability, not sampling uncertainty.}\label{tab:globus-trace}
\setlength{\tabcolsep}{4.2pt}\begin{tabular}{lrr}
\toprule \textbf{Policy} & \textbf{Norm. cost} & \textbf{Edge rate (\%)} \\
\midrule
CAPSUM & 0.725 $\pm$ 0.176 & 53.6 $\pm$ 16.8 \\
EDP-A & 1.332 $\pm$ 0.240 & 16.8 $\pm$ 20.5 \\
OREO & 41.380 $\pm$ 20.394 & 36.7 $\pm$ 16.2 \\
uEDC-L & 7.626 $\pm$ 3.195 & 16.2 $\pm$ 19.5 \\
\bottomrule\end{tabular}\end{table}

\begin{figure*}[t]
\centering
\includegraphics[width=0.94\textwidth]{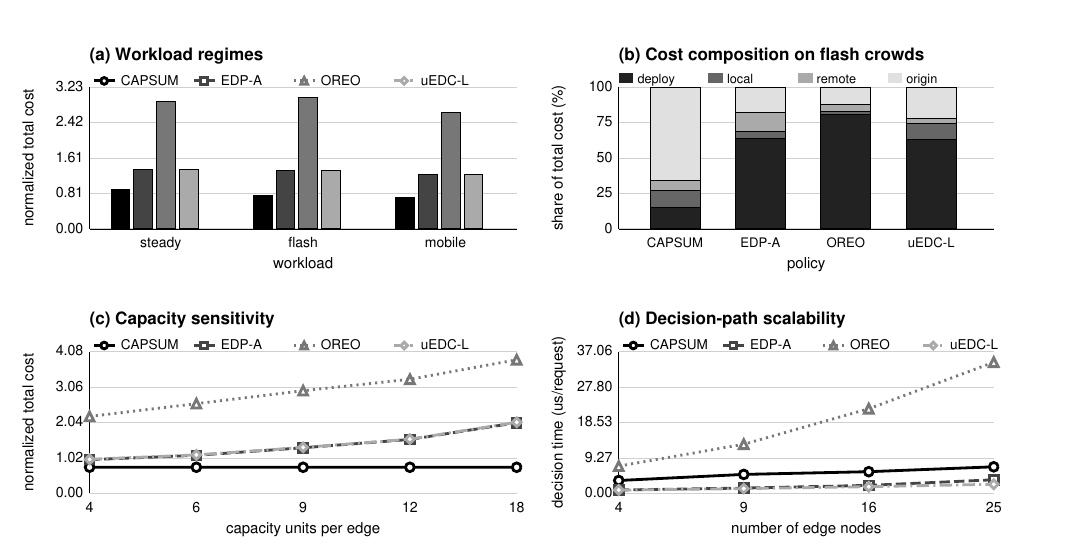}
\caption{Summary: (a) demand regimes, (b) flash-crowd cost composition,
(c) capacity sensitivity, and (d) decision-path scaling. Curves show seed means.}
\label{fig:evaluation-overview}
\end{figure*}

\begin{figure*}[t]
\centering
\includegraphics[width=0.94\textwidth]{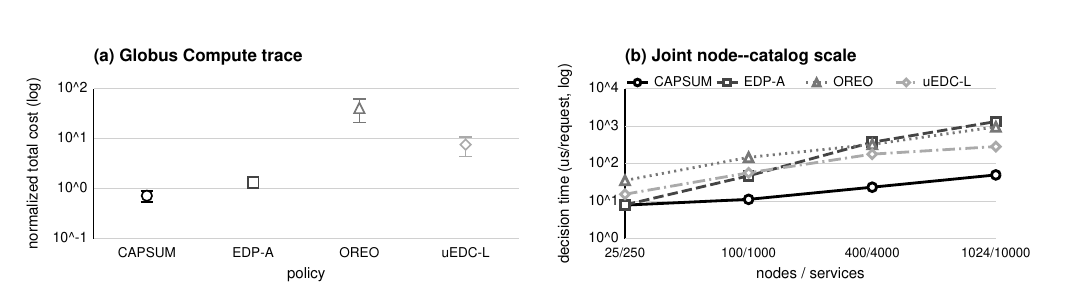}
\caption{Trace and joint scaling: (a) Globus block mean and descriptive 95\%
Student-$t$ interval; (b) decision time with about 24,000 requests/configuration
and three seeds. Patterns and markers identify methods.}
\label{fig:trace-scale}
\end{figure*}

\begin{figure*}[t]
\centering
\includegraphics[width=0.91\textwidth]{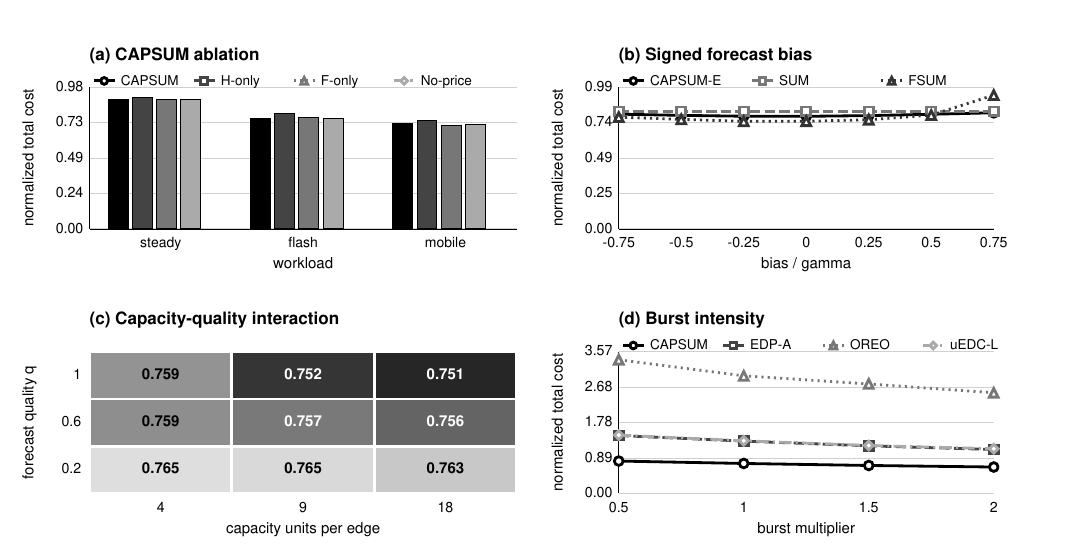}
\caption{Robustness diagnostics: (a) ablation, (b) signed bias,
(c) capacity--quality interaction, and (d) burst stress. Bars/curves are seed means.}
\label{fig:robustness-diagnostics}
\end{figure*}

\begin{figure*}[t]
\centering
\includegraphics[width=0.89\textwidth]{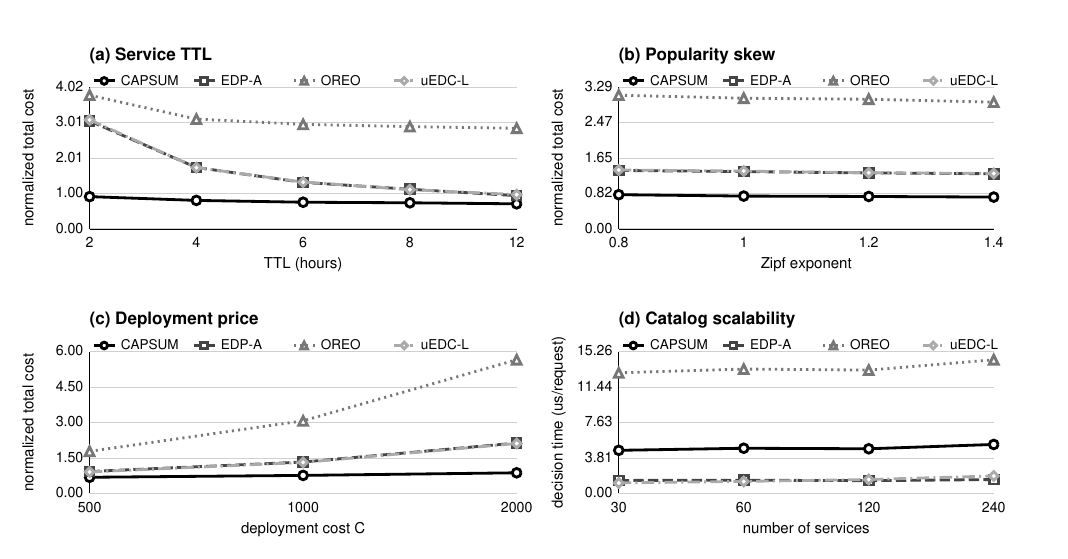}
\caption{Sensitivity to (a) TTL, (b) Zipf exponent, (c) deployment price at
$\beta=.25$, and (d) catalog size. Points are three-seed means.}
\label{fig:extended-sensitivity}
\end{figure*}

\subsection{Elastic Safety Layer}

At $q=0.7$, Table~\ref{tab:theory-track} gives \textsc{CAPSUM-E} cost
$0.789\pm0.005$ and empirical local-optimum ratio $1.078\pm0.004$ (Q1).
SUM is more conservative ($0.821\pm0.006$); FSUM is lower under this moderately
accurate advice ($0.756\pm0.002$), as expected from its consistency bias.

With exact advice, Figure~\ref{fig:prediction-sensitivity} gives
\textsc{CAPSUM-E} cost \CAPSUMEExactCost; near $\eta/\gamma=1$, FSUM rises to
\FSUMBadCost\ versus \CAPSUMEBadCost\ for \textsc{CAPSUM-E}, a \RobustGain\ reduction
(Q2). SUM is prediction independent. The crossing exposes the expected
consistency--robustness trade-off.

Signed-error results in Figure~\ref{fig:robustness-diagnostics}(b) show that
underprediction mainly delays both policies, whereas positive bias creates
unsupported FSUM purchases: at $0.75\gamma$, \textsc{CAPSUM-E} costs $0.810$
versus $0.933$ (\OverBiasGain\ lower), matching
Proposition~\ref{prop:history-guard}.

\begin{figure}[t]
\centering
\includegraphics[width=0.90\columnwidth]{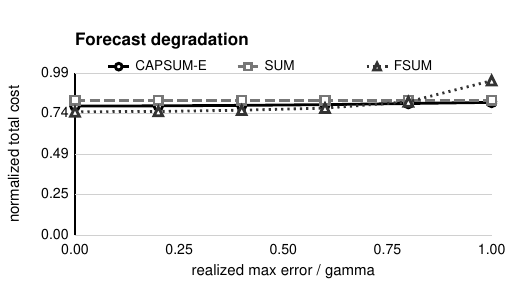}
\caption{Prediction stress test over six controlled error levels. The x-axis
is realized workload-valued error, not a classifier accuracy score.}
\label{fig:prediction-sensitivity}
\end{figure}

\subsection{Finite-Capacity Comparison and Cost Anatomy}

Within the common TTL model, \textsc{CAPSUM} has the lowest cost in all three
regimes (Table~\ref{tab:main-capacity}, Q3): $0.895\pm0.005$ for stationary
demand, \CAPSUMBurstyCost{} for flash crowds, and $0.726\pm0.004$ for moving
hotspots. Its flash-crowd gain over the next-best source-derived adapter is
\CAPSUMBestGain; the paired gap to the EDP-A-derived adapter is
$0.575\pm0.005$, with all five seeds favoring \textsc{CAPSUM}. This ranking is
for the adapters in Table~\ref{tab:baseline-fidelity}, not the source systems'
native objectives.

Edge-service rate alone is insufficient: on flash crowds EDP-A serves $76.5\%$
at an edge but costs $1.340\pm0.004$, OREO serves $62.7\%$ but costs
$2.989\pm0.058$, and immediate-admission LRU/LFU/Random obtain $100\%$ only
with costs 7.35--8.31. Common accounting exposes repeated instantiations that
their native objectives do not minimize.

Figure~\ref{fig:evaluation-overview}(a)--(b) attributes these outcomes (Q4):
OREO is deployment-dominated, while EDP-A and uEDC-L trade more deployment for
edge coverage; \textsc{CAPSUM} delays admission until workload amortizes both
image cost and current capacity price.
The history-only \textsc{CAPSUM-H} ablation also remains below every
source-derived adapter in all three synthetic regimes
(Table~\ref{tab:capsum-ablation}), so the ordering is not solely an artifact of
the controlled future oracle.

\subsection{Public-Trace Validation}

For Q9, we convert the 546.8-MB Globus Compute task table, whose
endpoint--function structure best matches $(t,n,s,w)$~\cite{bauer2024globus}.
The script samples 24 uniformly spaced 512-KiB ranges, removes boundary
fragments and incomplete records, deduplicates task identifiers, parses 39,764
unique tasks, and retains valid durations among the 32 most active endpoints
and 256 functions. Filtering after endpoint/function selection deliberately
preserves chronology and can yield fewer than the nominal maxima. The result
has \GlobusTasks{} requests, \GlobusNodes{} endpoints, and \GlobusServices{}
functions over \GlobusDays{} days.

Recorded submissions determine arrival time; fifth/95th-percentile-winsorized
duration maps monotonically to $w\in[70,130]$, and median argument size maps by
rank tercile to $a_s\in\{1,2,3\}$. One pair of records shares a nanosecond
timestamp; event identifier breaks the tie, and the latter record is advanced
to the next representable floating-point time to satisfy the strictly ordered
model without changing trace-scale timing. We use $T=7$ days and $B_n=24$. A causal
predictor extrapolates only the preceding 28 days of pair-local workload. After
one-sixth burn-in, five contiguous equal-count blocks are measured. Each block
is preceded by up to 28 days of state warm-up, after which accounting counters
are reset while the warmed predictor and cache state are retained. Predictions
at an arrival use only records already observed at that timestamp; neither a
future block nor a future record in the current block enters the forecast.

\textsc{CAPSUM} costs \GlobusCAPSUMCost{} and serves $53.6\%$ at an edge
(Table~\ref{tab:globus-trace}, Fig.~\ref{fig:trace-scale}(a)),
\GlobusCAPSUMGain{} below EDP-A's $1.332$. OREO costs $41.380$ at $36.7\%$
edge service and installs 4,104.0 images/block; uEDC-L costs $7.626$ and
installs 574.0, versus 13.6 for \textsc{CAPSUM} and 37.4 for EDP-A. Thus edge
coverage helps only when deployment cost is amortized.

\subsection{Component Ablation and Cross-Factor Robustness}

For Q7, removing the forecast guard raises flash-crowd cost from $0.765$ to
$0.796$ and mobility cost from $0.726$ to $0.751$
(Table~\ref{tab:capsum-ablation}, Fig.~\ref{fig:robustness-diagnostics}(a)).
Forecast-only and no-price variants are slightly cheaper on some accurate
traces, quantifying the price of pathwise safety and scarcity awareness; panel
(b) shows that the former saving fails under positive bias.

For Q8, expanding capacity from 4 to 18 changes cost by only $0.002$ at $q=.2$
but lowers it from $0.759$ to $0.751$ at $q=1$ (panel (c)). At burst multiplier
2, \textsc{CAPSUM} costs $0.665$ and beats the best source-derived adapter by
\PeakBurstGain\ (panel (d)), excluding a single-amplitude artifact.

\subsection{Capacity, TTL, Popularity, and Scale}

Figure~\ref{fig:evaluation-overview}(c) and
Figure~\ref{fig:extended-sensitivity}(a)--(c) answer Q5. Extra capacity can
increase EDP-A, OREO, and uEDC-L deployment cost, whereas \textsc{CAPSUM}'s
shadow price keeps its curve nearly flat. Longer TTL amortizes deployment
(largest gain from two to four hours); greater Zipf skew concentrates workload
without changing the flash-crowd ranking. When $C$ rises from 500 to 2000,
\textsc{CAPSUM} increases from $0.684$ to $0.873$, versus $2.134$, $5.662$, and
$2.116$ for EDP-A, OREO, and uEDC-L, respectively.

Figures~\ref{fig:evaluation-overview}(d) and
\ref{fig:extended-sensitivity}(d) answer Q6. Local evidence and knapsack paths
remain below $7\,\mu$s/request through 25 nodes; OREO reaches about
$34\,\mu$s because each Gibbs proposal reevaluates global offloading. Over
30--240 services, \textsc{CAPSUM} changes from $4.64$ to $5.27\,\mu$s, uEDC-L
from $1.13$ to $1.88\,\mu$s, and OREO from $12.99$ to $14.39\,\mu$s. These are
simulator decision times, not network or cold-start latency.

The joint sweep (Fig.~\ref{fig:trace-scale}(b)) uses 25/250, 100/1,000,
400/4,000, and 1,024/10,000 node/service pairs, about 24,000 requests, three
seeds, one-hour epochs, and $B_n=24$; $\alpha$ is diameter-normalized and
zero-demand EDP-A/uEDC-L entries are omitted. At the largest scale, mean times
are $50.38$, $287.43$, $962.12$, and $1{,}334.10\,\mu$s/request for
\textsc{CAPSUM}, uEDC-L, OREO, and EDP-A, exposing global-placement overhead
absent from one-dimensional sweeps.

\subsection{Validity Boundary}

The rankings are scoped. Globus removes the synthetic-arrival limitation but
uses a byte-stratified sample with proxy costs and image sizes. Azure lacks
edge locations~\cite{shahrad2020serverless}, while Alibaba has a shorter,
cluster-centric horizon~\cite{luo2021alibaba}. Simulator time omits prediction,
lookup, transfer, cold start, and redirection; common accounting omits native
reliability, energy, and latency objectives. Thus the results characterize the
stated TTL model rather than the source systems universally.

\section{Conclusion}
\label{sec:conclusion}

\textsc{CAPSUM-E} decomposes into node-service Bahncard instances and inherits
PFSUM's tight $\kappa(\eta)$ trade-off; its redirect overlay preserves the local
schedule and is $\kappa(\eta)/\beta$-competitive against the redirectable
benchmark. Finite-capacity \textsc{CAPSUM} adds shadow-priced evidence and
density eviction with feasibility and scale invariance, but no claimed capacity
ratio. Exact-offline and prediction-stress tests target the elastic mechanism;
ablation, causal-trace, and scale tests target finite admission. Across 895
common-model runs, \textsc{CAPSUM} has the lowest cost in all three synthetic
regimes and the sampled Globus trace. Future work should couple capacity and
routing in the offline benchmark and measure transfer, cold-start, prediction,
and directory costs in a container deployment.

\bibliographystyle{IEEEtran}
\bibliography{ref}








\end{document}